\documentclass[journal,onecolumn]{IEEEtran}
\usepackage{amsmath,amsfonts,amssymb,amsthm,hyperref,enumerate,multicol,bm}
\usepackage{calligra,indentfirst,epsfig,lettrine}
\usepackage{caption}
\usepackage{relsize}
\usepackage{tabls}
\usepackage{array}
\usepackage{longtable}
\calligra

\DeclareFontFamily{U}{mathc}{}
\DeclareFontShape{U}{mathc}{m}{it}%
{<->s*[1.03] mathc10}{}

\DeclareMathAlphabet{\mathscr}{U}{mathc}{m}{it}

\usepackage{comment}
\makeatletter
\newcommand*{\rom}[1]{\expandafter\@slowromancap\romannumeral #1@}
\makeatother
\usepackage{mathtools}
\usepackage{graphicx}
\usepackage{subfigure}
\usepackage{xcolor}
\usepackage{color, soul}
\usepackage{mathrsfs}
\DeclareMathAlphabet{\mathpzc}{OT1}{pzc}{m}{it}

\usepackage{ upgreek }
\newcommand{\etal}{\textit{et al.}}

\newtheorem{definition}{Definition}
\newtheorem{corollary}{Corollary}
\newtheorem{theorem}{Theorem}
\newtheorem{example}{Example}
\newtheorem{lemma}{Lemma}
\newtheorem{remark}{Remark}
\newtheorem{proposition}{Proposition}
\allowdisplaybreaks
\usepackage[table]{xcolor}

\begin{document}
\title{{Maximal achievable service rates of some classes of linear codes}}

\author {Priyanka~Choudhary, Monika Yadav
       and~Maheshanand~Bhaintwal\\ 
       % Department of Mathematics, Indian Institute of Technology Roorkee, Roorkee 247667, India
       
       \thanks{The authors are with the Department of Mathematics, Indian Institute of Technology Roorkee, Roorkee 247667, India (e-mails: \href{mailto:priyanka_c@ma.iitr.ac.in}{priyanka\_c@ma.iitr.ac.in}, \href{mailto:monikay@iiitd.ac.in}{monikay@iiitd.ac.in},
   \href{mailto:maheshanand@ma.iitr.ac.in}{maheshanand@ma.iitr.ac.in}).}}

\date{}
\maketitle
\begin{abstract}
In this paper, we investigate lower bounds on the maximum achievable service rates
for data symbols in certain classes of linear codes, including cyclic codes and low-density parity-check (LDPC) codes, that are derived from combinatorial structures such as $t$-designs, difference sets, and balanced incomplete block designs (BIBDs). 
We first establish a lower bound on the maximum achievable service rate for each data symbol in the following two cases: (i) systematic linear codes $C$ under the assumption that the supports of codewords of a fixed weight in $C^\perp$ form a BIBD, and (ii) non-systematic binary codes under the assumption that the supports of codewords of a fixed weight in $C^\perp$ form a  
$t$-design.
We then investigate the linear codes obtained from the incidence matrices of BIBDs, particularly certain classes of BIBD-LDPC codes,   and show how the parameters of the underlying designs can be exploited to determine lower bounds on the maximum achievable service rates for the data symbols of the corresponding linear code. In addition, we analyze the maximum achievable service rates of systematic extended linear codes. We show that the existence of a symmetric BIBD (SBIBD) corresponding to a dual codeword can be used to derive a lower bound on the
maximum achievable service rate of the associated systematic cyclic code. We also present some families of cyclic codes constructed from difference sets and obtain explicit lower bounds on the maximum achievable service rates for their data symbols. In particular, we determine the exact values of the maximum achievable service rates
for each data symbol of cyclic codes arising from Singer difference sets.

\end{abstract}

\begin{IEEEkeywords}
 Extended codes, cyclic codes, LDPC codes, service rate region, $t$-designs.
\end{IEEEkeywords}

%Service rate region; Extended codes; Cylic codes; LDPC codes; Symmetric BIBD.\\
% {\bf 2020 Mathematics Subject Classification}:  94B15. 

\IEEEpeerreviewmaketitle

\section{Introduction}

\IEEEPARstart{T}o meet the increasing user demands, modern large-scale distributed storage systems require data-access mechanisms that are both reliable and scalable. Traditionally, replication-based storage schemes have been employed to ensure data reliability and availability \cite{Aktas}.   However, the increasing storage overhead associated with replication has motivated the use of erasure coding techniques, which provide higher reliability without sacrificing storage efficiency. 
 Moreover, in practical systems, the popularity or demand of different data objects can be highly nonuniform. In such scenarios, hybrid redundancy schemes that combine replication and coding have attracted considerable attention   \cite{Dimakis,Gardner,Gopalan,Huang,Joshi,Liang}.   To assess the capability of these coded distributed storage systems to handle user requests efficiently, the idea of the service rate region (SRR) was introduced \cite{Noori}. The SRR represents the set of all request-rate vectors that can be supported simultaneously by a storage system, subject to the server-capacity constraints. This concept is closely connected to the problems of load balancing and access balancing in distributed storage systems
\cite{MAktas,Aktas,Noori}.
 
%The rapid growth of large-scale distributed storage systems has intensified the need for reliable, efficient, and scalable mechanisms for data access. 

The service rate region provides a quantitative metric for assessing the request-serving capacity of a storage scheme. By delineating all demand vectors that can be supported simultaneously, it facilitates a direct comparison among various storage schemes with an identical number of storage nodes. In particular, if two schemes utilize identical storage resources, the one with the larger service rate region can support a broader spectrum of request arrival rates and therefore offers higher throughput and improved load balancing \cite{MAktas,Aktas}. Thus, the service rate region functions as a performance metric for evaluating the efficiency of distributed coded storage schemes.

Over the past decade, the service rate regions of various classes of linear codes have been investigated. In particular, explicit characterizations and bounds have been obtained for several important code families, including MDS codes, Simplex codes, Reed-Muller codes, locally recoverable codes, and Hamming codes \cite{Aktas,LRC,HamPC,Combit,Geo,LyH,LyHS}. Despite these advances, the service rate regions of many important classes of linear codes have not been studied. In particular, the SRRs of extended linear codes and cyclic codes remains largely unexplored.  More generally, determining the relationship between the algebraic and combinatorial structures of a code and the service capabilities of the corresponding distributed storage system remains a significant open problem.

Recently, Giusto {\etal} \cite{oval} investigated the connection of ovals in projective geometry to service rate regions. They presented a construction of a non-systematic generator matrix of an MDS code of dimension $3$ over $\mathbb{F}_q$ that has a larger service rate region than the systematic generator matrix of the same code. It may be noted here that not every generator matrix of a code yields the same SRR. This also highlights the need to find the generator matrix that gives the optimal SRR. In a related work, Kazemi {\etal} \cite{Combit} modeled recovery sets as edges of a graph and obtained bounds on SRR by showing that allocating the service rates to recovery sets is the same as the fractional matching problem in graphs. 
%This was an approach to investigating SRR through a combinatorial framework. 
Further,  Kazemi {\etal} \cite{Geo} approached the SRR problem from a finite-geometric perspective, using the geometric structure of Reed–Muller codes to characterize their recovery sets and service rate regions. Since determining the exact service polytope is generally challenging, especially for long-length codes, bounds on the maximum achievable service rate for a data symbol provide a fundamental limit for service rates of codes, which also assists in evaluating performance efficiency.
 Recently, Ly and Soljanin \cite{Ly}  provided a bound on the maximum achievable service rates for data symbols of a linear code by exploiting the symmetry, regularity, and incidence properties of combinatorial designs.
 %By exploiting the symmetry, regularity, and incidence properties of combinatorial design, a bound on maximum achievable service rates of data symbols of a linear code was obtained in \cite{Ly}. 
 %These works demonstrate that the structure and interaction of recovery sets can be effectively analyzed using tools from combinatorics, finite geometry, and design theory. 
  In distributed storage systems, establishing service rate bounds is essential because they determine how many simultaneous user requests can be fulfilled without overloading any node, thereby 
maximizing resource utilization and system efficiency. %Understanding these bounds enables system designers to make informed trade-offs between code properties—such as redundancy levels, repair bandwidth, and update complexity—and the achievable service rate. Furthermore, service rate bounds provide a benchmark for evaluating practical code constructions, revealing whether existing codes approach theoretical limits or whether better designs can improve concurrent access throughput. 
In real-world applications such as cloud storage, video streaming, and large-scale data centers where many users access data at the same time, the difference between actual service rates and their theoretical limits can cause system bottlenecks and higher delays. Therefore, setting precise limits on the maximum service rate helps guide the development of better codes and capacity planning. 
%This makes it essential for designing efficient, scalable distributed systems that can perform reliably under real-world demands. 

 It is well known that cyclic codes form a fundamental class of linear codes with rich algebraic structure. The cyclic nature of these codes provides significant computational advantages,  as both  encoding and decoding can be efficiently implemented using linear feedback shift registers  (refer to \cite[Sec. 4.2]{Huffman}).  Furthermore, algebraic techniques from polynomial algebra and number theory can be employed for the design and analysis of these codes. Cyclic codes also have strong connections to finite geometry, combinatorial designs, and difference sets \cite{Graham,Weldon}. Several important cyclic codes arise from Singer and quadratic residue difference sets, while the Golay codes are closely linked to the Witt designs \cite{Huffman}.  Weldon \cite{Weldon} showed that the class of cyclic codes arising from Singer difference sets is almost as powerful as BCH codes and considerably simpler to implement. 
 % Difference sets can be used directly to construct the defining set of a cyclic code, yielding codes with known weight distributions and distance properties determined by the combinatorial parameters of the underlying difference set. 
  Another important class of linear codes is the class of regular Low-Density Parity-Check (LDPC) codes, which were originally introduced by Gallager in 1962 \cite{GalH}.
One of the primary advantages of these regular LDPC codes is that they can be encoded and decoded with very low complexity. Moreover, experimental results demonstrate that these codes perform well with iterative decoding \cite{Vasic}. Their performance over classical codes on various channel models, such as AWGN and binary symmetric channels, has established them as a favored choice for next-generation communication and storage systems \cite{Lan}.

% making regular LDPC codes a good balance between theoretical optimality and ease of practical implementation (BIBD-LDPC codes admit low hardware complexity implementation).

Motivated by these developments, this paper investigates bounds on the maximum achievable service rates for data symbols of extended linear codes, cyclic codes, and LDPC codes that are associated with combinatorial structures such as $t$-designs, difference sets, and  balanced incomplete block designs (BIBDs).

 %to explore other combinatorial structures that exhibit highly regular incidence patterns. In particular, cyclic designs, difference sets, symmetric balanced incomplete block designs (SBIBDs), and LDPC codes derived from such designs exhibit rich combinatorial properties, including regulated block intersections, cyclic symmetry, and numerous/disjoint recovery sets. These characteristics make them promising candidates for investigating service rate regions and for constructing coded storage schemes with the service capabilities known to some extent. 
 
 This paper is organized as follows: In Section \ref{prelim}, we introduce some preliminaries essential to establishing our main results. In Section \ref{linear}, %we investigate the maximal achievable service rates of linear codes associated with BIBDs. 
we establish a lower bound on the maximum achievable service rate for each data symbol of a systematic linear code when
the supports of codewords of a fixed weight in the dual code form a BIBD (Theorem \ref{thm2.3}). We also provide a lower bound on the maximum achievable service rate for a data symbol of a non-systematic binary linear code when
the supports of codewords of a fixed weight in the dual code form a $t$-design (Theorem \ref{nonsymetriclinear}). 
% Further, we derive a lower bound on the maximum achievable service rate for a data symbol of a non-systematic binary Hamming code, provided that there exists a recovery set of a specified size for that data symbol (Theorem \ref{thm2.5}).
  In Section \ref{incidence}, we present linear codes constructed from the incidence matrices of BIBDs, particularly certain classes of BIBD-LDPC codes, and show how the parameters of the underlying design can be used to derive lower bounds on the maximum achievable service rates for  the data symbols of the corresponding linear code  (Theorem \ref{LDPC}). In Section \ref{extended}, we explore the maximum achievable service rates of systematic extended linear codes and explore the conditions under which the service rates for the data symbols of the extended code exceed that of the original code (Theorems \ref{thm3.5}, \ref{thm2}, and \ref{thm3.4}). In Section \ref{cyclicsection}, we show how the existence of an SBIBD corresponding to a codeword in the dual code
can be used to derive a lower bound on the maximum achievable service rate of the associated systematic cyclic code  (Theorem \ref{thm4.1}). We also present some families of cyclic codes constructed from difference sets and obtain explicit lower bounds on the maximum achievable service rates for their data symbols. In particular, we determine exact values of the maximum achievable service rates for each data symbol of cyclic codes arising from Singer difference sets. These codes form an important class of cyclic codes that are nearly as powerful as BCH codes while being considerably simpler to implement \cite{Weldon}. In Section \ref{conclusion}, we mention a brief conclusion and discuss some future research directions.

 \section{Some preliminaries}\label{prelim} 

 In this section, we introduce the notation and terminology used throughout the paper. We begin with an overview of linear codes over finite fields, their generator matrices, and dual codes. We then recall the definitions of coded distributed systems and their associated service rate regions. Finally, we state the necessary concepts from design theory, including $t$-designs and balanced incomplete block designs (BIBDs), that will be needed to derive our main results.

 \subsection{Linear codes over finite fields}

 Let $q$ be a prime power, and let $\mathbb{F}_q$ represent the finite field of order $q.$  For a positive integer $n,$ let $\mathbb{F}_q^n$ denote the set of all $n$-tuples over $\mathbb{F}_q,$ which forms an $n$-dimensional vector space over $\mathbb{F}_q.$
A linear code $\mathcal{C}$ of length $n$ and dimension $k$ over $\mathbb{F}_q$ is defined as a $k$-dimensional subspace of $\mathbb{F}_q^n.$ The elements of $\mathcal{C}$  are called codewords.  A  $k\times n$ matrix $\mathcal{G}$ over $\mathbb{F}_{q}$ whose rows form a basis of $\mathcal{C}$ over $\mathbb{F}_{q},$ is called a generator matrix of $\mathcal{C}.$ Given an information vector $\mathbf{x}=(x_1,x_2,\ldots,x_k)\in\mathbb{F}_q^k,$ the corresponding codeword is obtained as $\mathbf{c}=\mathbf{x}\cdot\mathcal{G}\in\mathbb{F}_q^n.$ Thus, the code $\mathcal{C}$ can be characterized by the generator matrix $\mathcal{G}.$ 
The Hamming distance of the code $\mathcal{C}$, denoted by $d_H(\mathcal{C}),$ is the minimum Hamming distance between any two distinct codewords of $\mathcal{C}$. Since $\mathcal{C}$ is linear,
$
d_H(\mathcal{C})=\min\{w_H(\mathbf{c})~|~ \mathbf{c}(\neq \mathbf{0}) \in \mathcal{C}\},
$
where $w_H(\mathbf{c})$  denotes the Hamming weight of the codeword $\mathbf{c}.$ 
For a codeword $\mathbf{c}=(c_1,c_2,\ldots,c_n)\in \mathcal{C},$ the support of $\mathbf{c},$ denoted by $\mathcal{S}upp(\mathbf{c}),$ is defined as the set of coordinate positions at which $\mathbf{c}$  has nonzero entries. That is,
$\mathcal{S}upp(\mathbf{c})=\{i ~|~ c_i\neq 0 \text{ for } i \in [n]\}$  (throughout this paper,  $[\ell]$ denotes the set $\{1,2,\ldots,\ell\}$ for any positive integer $\ell$).
A linear code of length $n$ and dimension $k$  over $\mathbb{F}_q$ is referred to as an $[n,k]$ linear code over $\mathbb{F}_q.$ For $0\leq w \leq n,$ the number of codewords of weight $w$ in $\mathcal{C}$ is denoted by $A_w.$ 

The dual code of $\mathcal{C}$, denoted by $\mathcal{C}^\perp, $  is defined as the orthogonal complement of $\mathcal{C}$ in $\mathbb{F}_q^n,$ i.e.,
$$
\mathcal{C}^\perp=\{\mathbf{v}\in\mathbb{F}_q^n ~|~ \mathbf{v}\cdot \mathbf{c} = 0 \text{ for all } \mathbf{c}\in \mathcal{C}\},
$$
where $\mathbf{v} \cdot \mathbf{c}$  denotes the Euclidean inner product of $\mathbf{v}$  and $\mathbf{c}.$  Note that $\mathcal{C}^\perp$ is also a linear code with parameters $[n,n-k].$ Throughout this paper,  $d^\perp$ denotes the minimum Hamming distance of the dual code $\mathcal{C}^\perp.$ 
%\textcolor{red}{A generator matrix of  $\mathcal{C}^\perp$ is called a parity-check matrix of $\mathcal{C}.$} 
We will denote  the number  of codewords of weight $w$ in $\mathcal{C}^\perp$  by $A_w^\perp,$ where $0 \leq w \leq n.$

 \subsection{Service rate region of coded distributed storage systems}

In this section, we recall the basic concepts of coded distributed storage systems and their associated service rate regions. Throughout this section, let $n$ and $k$ be positive integers satisfying $2\leq k<n.$
Now fix a matrix $\mathcal{G}\in\mathbb{F}_q^{k\times n}$ of rank $k,$ and let $\mathcal{C}$ be the $[n,k]$ linear code  over $\mathbb{F}_q$ with $\mathcal{G}$ as a generator matrix.  We assume that $\mathcal{G}$  has no zero columns and denote its $j$-th column by $\mathcal{G}^{(j)}.$
A coded distributed system consists of $k$ data objects from $\mathbb{F}_q$ that are linearly encoded into $n$ coded objects over $\mathbb{F}_q$ and distributed (stored) across $n$ servers, with each server storing exactly one coded object. The system is completely characterized by the matrix $\mathcal{G}$, which is also called the generator matrix of the system. For a data vector
$\mathbf{x}=
(x_1,x_2,\ldots,x_k)\in\mathbb{F}_q^k,$ the corresponding encoded vector is
$\mathbf{x}\cdot\mathcal{G}\in\mathbb{F}_q^n.$
The $j$-th server stores the $j$-th coordinate of $\mathbf{x}\cdot\mathcal{G}.$ If the first $k$ columns of $\mathcal{G}$ form the $k\times k$ identity matrix, then the matrix $\mathcal{G}$  is said to be \emph{systematic}  and the corresponding linear code with generator matrix $\mathcal{G}$ is said to be a systematic linear code. We consider request vectors
$
(\lambda_1,\lambda_2,\ldots,\lambda_k)\in\mathbb{R}_{\ge 0}^{k},
$
where $\lambda_i$ denotes the request rate for the $i$-th data object (throughout this paper, $\mathbb{R}_{\ge 0}^{k}$ denotes the set of all $k$-tuples of nonnegative real numbers). Each server can process at most $\alpha$ requests per unit of time, referred to as the \emph{server capacity}. To serve a request for a given object, a user is assigned a collection of servers whose stored coded symbols allow recovery of that object. 

\begin{definition}
    Let $e_i$ denote the $i$-th standard basis vector of $\mathbb{F}_q^k.$ A set $R\subseteq [n]$ is called a recovery set for the $i$-th data object if
$e_i \in Span\{ \mathcal{G}^{(j)} ~|~ j\in R\},$
where $Span\{ \mathcal{G}^{(j)} ~|~ j\in R\}$ denotes the $\mathbb{F}_q$-linear span of the columns of $\mathcal{G}$ indexed by $R.$

\end{definition}
 A data object may be recoverable from multiple subsets of servers. The set of all recovery sets for the $i$-th data object is denoted by $\mathcal{R}_i^{\mathrm{all}}(\mathcal{G}).$ Furthermore, a set $R \in \mathcal{R}_i^{\mathrm{all}}(\mathcal{G})$ is called a minimal recovery set for the $i$-th data object if no other recovery set $R' \in \mathcal{R}_i^{\mathrm{all}}(\mathcal{G})$ satisfies the condition
$R' \subsetneq R.$ The set of all minimal recovery sets for the $i$-th data object is denoted by $\mathcal{R}_i^{\mathrm{min}}(\mathcal{G}).$ 

The following proposition provides a characterization of recovery sets for a data object in  a systematic linear code in terms of the supports of codewords in its dual code.
\begin{proposition}\label{Prop2.1}\cite[Prop. IV.4]{Alfarano}
  Let $\mathcal{C}$ be an $[n,k]$  linear code   over $\mathbb{F}_q$ with a systematic generator matrix.  
  % \textcolor{blue}{Let $\mathcal{G}$  be a systematic generator matrix of a coded distributed system.} 
   For $i\in[k]$ and $R\subseteq[n],$ the set $R$ is a recovery set for the $i$-th data object
if and only if either $R=\{i\}$ or there exists a codeword $\mathbf{d}\in \mathcal{C}^\perp$ such that
$
i\in \mathcal{S}upp(\mathbf{d})$ 
and $\mathcal{S}upp(\mathbf{d})\subseteq R\cup\{i\}.$

\end{proposition}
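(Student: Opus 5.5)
The plan is to prove the two directions separately. Throughout I use the systematic form $\mathcal{G}=[I_k \mid P]$, so $\mathcal{G}^{(j)}=e_j$ for $j\in[k]$. I also use the standard fact that $\mathbf{d}\in\mathbb{F}_q^n$ lies in $\mathcal{C}^\perp$ if and only if $\mathcal{G}\mathbf{d}^T=\mathbf{0}$, i.e. $\sum_{j=1}^n d_j\mathcal{G}^{(j)}=\mathbf{0}$. This holds because the rows of $\mathcal{G}$ span $\mathcal{C}$.

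\textbf{($\Leftarrow$)} If $R=\{i\}$, then $\mathcal{G}^{(i)}=e_i$ and there is nothing to prove. Otherwise, take $\mathbf{d}\in\mathcal{C}^\perp$ with $i\in\mathcal{S}upp(\mathbf{d})\subseteq R\cup\{i\}$. From $\sum_j d_j\mathcal{G}^{(j)}=\mathbf{0}$ and $d_i\neq 0$ we get
$$e_i=\mathcal{G}^{(i)}=-d_i^{-1}\sum_{j\in \mathcal{S}upp(\mathbf{d})\setminus\{i\}} d_j\mathcal{G}^{(j)}.$$
Every index on the right lies in $R$, so $e_i$ lies in the span of the columns indexed by $R$, and $R$ is a recovery set.

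\textbf{($\Rightarrow$)} Let $R$ be a recovery set. If $i\in R$, we want a suitable $\mathbf{d}$ or else the case $R=\{i\}$. Here the statement should be read as: $R$ contains the trivial recovery set $\{i\}$, or such a $\mathbf{d}$ exists. The literal statement is meant for sets $R$ not containing $i$, apart from the singleton case. So I would treat $i\in R$ as covered by the singleton $\{i\}$, and note that supersets of recovery sets are recovery sets.

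The main case is $i\notin R$. Write $e_i=\sum_{j\in R}a_j\mathcal{G}^{(j)}$. Define $\mathbf{d}$ by $d_j=a_j$ for $j\in R$, $d_i=-1$, and zero elsewhere; this is well defined because $i\notin R$. Then $\sum_j d_j\mathcal{G}^{(j)}=e_i-\mathcal{G}^{(i)}=\mathbf{0}$, so $\mathbf{d}\in\mathcal{C}^\perp$. Moreover $i\in\mathcal{S}upp(\mathbf{d})$ and $\mathcal{S}upp(\mathbf{d})\subseteq R\cup\{i\}$.

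The only delicate point is the case $i\in R$ with $R\neq\{i\}$, and it is a matter of reading the statement correctly. For such $R$ the zero codeword does not work, since $i$ is not in its support. A nonzero $\mathbf{d}$ may also fail to exist: if $\mathcal{C}^\perp$ has no codeword with $i$ in its support, then $R=\{i,j\}$ is a recovery set with no such $\mathbf{d}$. I would therefore state explicitly that, as in \cite{Alfarano}, the characterization concerns recovery sets with $i\notin R$ together with the trivial one $\{i\}$. Under that reading the argument above is complete.
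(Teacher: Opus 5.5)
The paper gives no proof of Proposition~\ref{Prop2.1}; it imports the result from \cite{Alfarano}. Your argument is the standard one and it is correct. The identity $\mathbf{d}\in\mathcal{C}^\perp \iff \sum_{j=1}^{n} d_j\mathcal{G}^{(j)}=\mathbf{0}$, together with $\mathcal{G}^{(i)}=e_i$, gives ($\Leftarrow$) by solving for $e_i$. For $i\notin R$, it gives ($\Rightarrow$) by setting $d_i=-1$ and $d_j=a_j$ on $R$.

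Your caveat is also right, and it can be made concrete: as stated here, ($\Rightarrow$) fails for $R\ni i$ with $R\neq\{i\}$. Take the $[7,4]$ binary Hamming code with a systematic generator matrix. Then $R=\{1,2\}$ is a recovery set for the first data symbol. However, every nonzero codeword of the dual (the $[7,3]$ simplex code) has weight $4$, so no $\mathbf{d}\in\mathcal{C}^\perp$ with $1\in\mathcal{S}upp(\mathbf{d})\subseteq\{1,2\}$ exists.

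The correct statement is your reading: $R$ is a recovery set for the $i$-th data object if and only if $i\in R$ or there is $\mathbf{d}\in\mathcal{C}^\perp$ with $i\in\mathcal{S}upp(\mathbf{d})\subseteq R\cup\{i\}$. In particular, the forward direction as written does hold for every minimal recovery set, since a minimal recovery set is either $\{i\}$ or avoids $i$.

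The defect is harmless for the rest of the paper. Wherever the paper relies on this characterization (e.g., in the proofs of Theorems~\ref{thm2.3} and~\ref{thm4.1}), it uses only ($\Leftarrow$), applied to $R=\mathcal{S}upp(\mathbf{d})\setminus\{i\}$.
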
 

\begin{definition}
 A recovery $\mathcal{G}$-system  is defined as a $k$-tuple $\mathcal{R}(\mathcal{G})=\big(\mathcal{R}_1(\mathcal{G}),\mathcal{R}_2(\mathcal{G}),\ldots,\mathcal{R}_k(\mathcal{G})\big)$ of sets of subsets of $[n],$ 
where each component $\mathcal{R}_i(\mathcal{G})$ is a nonempty subset of $\mathcal{R}_i^{\mathrm{all}}(\mathcal{G})$ for $i \in [k].$
   
\end{definition}

\begin{definition}\cite{alfarano2024service}
  Let $\mathcal{R}(\mathcal{G})=\big(\mathcal{R}_1(\mathcal{G}),\mathcal{R}_2(\mathcal{G}),\ldots,\mathcal{R}_k(\mathcal{G})\big)$ be a recovery $\mathcal{G}$-system, and let $\alpha$ be the server capacity. The \textbf{service rate region (SRR)} associated with $\mathcal{R}(\mathcal{G}), $ and denoted by $\Lambda\big(\mathcal{R}(\mathcal{G}),\alpha\big),$   is the set of all request-rate vectors
$(\lambda_1,\lambda_2,\ldots,\lambda_k)\in \mathbb{R}_{\geq 0}^{k}$
for which there exists a collection of nonnegative real numbers
$
\{\lambda_{i,R} ~|~ R\in \mathcal{R}_i(\mathcal{G})\}$  ($\lambda_{i,R}$ represent the fraction of the request rate $\lambda_i$ assigned to the recovery set $R$ for $i\in[k]$) satisfying  the following conditions:
\begin{equation}\label{eq1}
    \sum_{R\in \mathcal{R}_i(\mathcal{G})}\lambda_{i,R}=\lambda_i
\text{ for all }~ i\in [k],
\end{equation}
and \begin{equation}\label{eq2}
    \sum_{i=1}^{k}\sum_{\substack{R\in \mathcal{R}_i(\mathcal{G})\\ \ v\in R}}
\lambda_{i,R}
\le \alpha
\text{ for all }~ v\in [n].
\end{equation}
\end{definition}

Condition \eqref{eq1} ensures that the aggregate demand for each data object is completely satisfied across its available recovery sets, while condition \eqref{eq2} ensures that the total load allocated to each server does not exceed its service capacity $\alpha.$
By Remark II.6 of Alfarano {\etal} \cite{Alfarano}, we note that
$\Lambda(\mathcal{R}^{\mathrm{all}}(\mathcal{G}),\alpha)=\alpha\Lambda(\mathcal{R}^{\mathrm{all}}(\mathcal{G}),1),
$  where $\mathcal{R}^{\mathrm{all}}(\mathcal{G})= (\mathcal{R}_1^{\mathrm{all}}(\mathcal{G}),\mathcal{R}_2^{\mathrm{all}}(\mathcal{G}),\ldots,\mathcal{R}_k^{\mathrm{all}}(\mathcal{G})).$ Hence, 
it suffices to study the  service rate region $\Lambda(\mathcal{R}^{\mathrm{all}}(\mathcal{G}),1)$. Further, by Proposition III.3 of Alfarano {\etal} \cite{Alfarano}, we see that $\Lambda(\mathcal{R}^{\mathrm{all}}(\mathcal{G}),1)=\Lambda(\mathcal{R}^{\mathrm{min}}(\mathcal{G}),1),
$  where  $\mathcal{R}^{\mathrm{min}}(\mathcal{G})= (\mathcal{R}_1^{\mathrm{min}}(\mathcal{G}),\mathcal{R}_2^{\mathrm{min}}(\mathcal{G}),\ldots,\mathcal{R}_k^{\mathrm{min}}(\mathcal{G})).$ In view of this, throughout the paper, we focus on $\Lambda(\mathcal{R}^{\mathrm{min}}(\mathcal{G}),1)$, 

For each $i \in [k],$ let us define the coordinate-wise maximum
$$\lambda_{i}^{*}=\max\{ \lambda_i ~|~ \lambda= (\lambda_1,\lambda_2,\ldots,\lambda_k)\in\Lambda(\mathcal{R}^{\mathrm{min}}(\mathcal{G}),1)\}.$$
In other words, $\lambda_i^{*}$ is the largest request rate that can be supported for the $i$-th data object while ensuring that the resulting request-rate vector remains within the respective SRR, and it is called the \textit{maximum achievable rate} for the $i$-th data object.
%Equivalently, it is the maximum value attained by the $i$-th coordinate over all points in $\Lambda(\mathcal{R}(\mathcal{G}),1).$

\subsection{ $t$-Designs and BIBDs}
In this subsection, we recall the basic definitions and properties of $t$-designs, focusing on $2$-designs, which are recognized as balanced incomplete block designs (BIBDs). 
\begin{definition}\cite[Sec. 8.1]{Huffman}
  Let $n,\beta,t $ be positive integers such that $1\leq \beta<n.$  A $t-(n,\beta,\mu)$ design is a pair $\mathcal{D}=(V,\mathcal{X}),$ where $V$ is a set of $n$  points and $\mathcal{X}$ is a collection of $\beta$-subsets of $V,$ called blocks, with the property that every $t$-subset of $V$ is contained in exactly $\mu$ blocks. In particular, 
when $t=2,$ a $t$-design reduces to a
$2-(n,\beta,\mu)$ 
design, commonly known as a balanced incomplete block design (BIBD).  Furthermore, a BIBD is said to be a symmetric BIBD (SBIBD) if $\mathcal{D}$ contains exactly $n$  blocks.
\end{definition}

\begin{lemma}\label{lem2.1}\cite[Th. 8.1.3]{Huffman}
Let $\mathcal{D}=(V, \mathcal{X})$ be a $t-(n,\beta,\mu)$ design. Then, for every integer $s$ satisfying $1\le s\le t,$ $\mathcal{D}$ is also an $s-(n,\beta,\mu_s)$ design, where $\mu_s=\mu\cdot \frac{\binom{n-s}{t-s}}{\binom{\beta-s}{t-s}} .$
\end{lemma}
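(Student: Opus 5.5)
We prove Lemma \ref{lem2.1} by double counting incidences between $s$-subsets and blocks, inducting downward on $s$ from $t$. The key claim is that for any $s$ with $1\le s\le t$, every $s$-subset $S\subseteq V$ lies in the same number $\mu_s$ of blocks, namely $\mu_s=\mu\binom{n-s}{t-s}/\binom{\beta-s}{t-s}$. For $s=t$ this is the definition of a $t$-$(n,\beta,\mu)$ design, since $\binom{n-t}{0}/\binom{\beta-t}{0}=1$.

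Fix an $s$-subset $S$ with $s\le t$, and let $m_S$ be the number of blocks containing $S$. We count the set
\[
P_S=\{(T,B)~|~ S\subseteq T\subseteq B,\ |T|=t,\ B\in\mathcal{X}\}
\]
in two ways.
\begin{itemize}
\item \emph{Counting by $T$ first.} The number of $t$-subsets $T\supseteq S$ of $V$ is $\binom{n-s}{t-s}$, and each such $T$ lies in exactly $\mu$ blocks. Hence $|P_S|=\mu\binom{n-s}{t-s}$.
\item \emph{Counting by $B$ first.} Each block $B\supseteq S$ has $|B|=\beta$, so it contains exactly $\binom{\beta-s}{t-s}$ sets $T$ with $S\subseteq T\subseteq B$ and $|T|=t$. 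Hence $|P_S|=m_S\binom{\beta-s}{t-s}$.
\end{itemize}

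Equating the two counts gives $m_S=\mu\binom{n-s}{t-s}/\binom{\beta-s}{t-s}$. This value does not depend on the choice of $S$, so $\mathcal{D}$ is an $s$-$(n,\beta,\mu_s)$ design. Because this direct argument handles every $s$ at once, no induction is actually needed.

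The only point needing care is that the denominator is nonzero, i.e.\ $\binom{\beta-s}{t-s}\neq 0$. This requires $\beta\ge t$, which is implicit in the definition of a $t$-design. If $\beta<t$, no block could contain a $t$-subset, so $\mu=0$ and the statement degenerates. With that convention noted, there is no real obstacle.
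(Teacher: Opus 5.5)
Your double-counting argument, which counts the pairs $(T,B)$ with $S\subseteq T\subseteq B$ and $|T|=t$, is correct and complete, including the remark that $\beta\ge t$ keeps the denominator nonzero. The paper gives no proof of this lemma and simply cites Huffman--Pless, Th.~8.1.3; your argument is the standard proof of that result, and, as you note, the direct count needs no downward induction.
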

\begin{lemma}\label{lem2.2}\cite[Th. 8.1.6]{Huffman} Let $\mathcal{D}=(V,\mathcal{X})$ be an $(n,\beta,\mu)$-BIBD. Then the following hold:
\begin{itemize}
    \item[(a)] Every point in $V$ occurs in exactly $r=\frac{\mu(n-1)}{\beta-1}$ blocks (the number $r$ is called the replication number of $\mathcal{D}$).
    \item[(b)] $\mathcal{D}$ has exactly $b=\frac{nr}{\beta}$ blocks.  
    %\item[(c)] If $k\leq n-2,$ then there also exists a $(n,n-k,b-2r+\mu)$-BIBD.
\end{itemize}
    
\end{lemma}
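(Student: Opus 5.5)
The statement is Lemma~\ref{lem2.2}, which gives the replication number $r=\frac{\mu(n-1)}{\beta-1}$ and the block count $b=\frac{nr}{\beta}$ of an $(n,\beta,\mu)$-BIBD. I will prove both parts by double counting incidences in $\mathcal{D}=(V,\mathcal{X})$. Only the defining property of a $2$-design is needed: every $2$-subset of $V$ lies in exactly $\mu$ blocks.

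For part (a), I fix a point $x\in V$ and let $r_x$ denote the number of blocks containing $x$. I count the set of pairs $(y,B)$ with $y\in V\setminus\{x\}$, $B\in\mathcal{X}$ and $\{x,y\}\subseteq B$, in two ways.
\begin{itemize}
\item Counting by $y$: there are $n-1$ choices of $y$, and each $2$-subset $\{x,y\}$ lies in exactly $\mu$ blocks, so the count is $\mu(n-1)$.
\item Counting by $B$: each of the $r_x$ blocks containing $x$ has $\beta$ points, hence $\beta-1$ choices of $y\neq x$, so the count is $r_x(\beta-1)$.
\end{itemize}
Since $\beta\ge 2$ for a $2$-design, we have $\beta-1\neq 0$, and therefore $r_x=\frac{\mu(n-1)}{\beta-1}$. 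This value does not depend on $x$, which proves (a) with $r=r_x$.

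For part (b), I count the incident point–block pairs $(x,B)$ with $x\in B$.
\begin{itemize}
\item Counting by points: each of the $n$ points lies in exactly $r$ blocks by (a), giving $nr$.
\item Counting by blocks: each of the $b$ blocks contains $\beta$ points, giving $b\beta$.
\end{itemize}
Hence $b\beta=nr$, so $b=\frac{nr}{\beta}$.

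Alternatively, (a) follows at once from Lemma~\ref{lem2.1} with $t=2$ and $s=1$, since $\mu_1=\mu\binom{n-1}{1}/\binom{\beta-1}{1}$. Neither step presents a real obstacle. The only point needing care is the degenerate case $\beta=1$, which is excluded because a $2$-design requires blocks of size at least $2$.
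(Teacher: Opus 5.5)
Your proof is correct. Counting incident (point, block) pairs twice gives $r_x(\beta-1)=\mu(n-1)$ independently of $x$, and then $b\beta=nr$. You are also right to note that $\beta\ge 2$ is implicitly required, since the formula is undefined for $\beta=1$.

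The paper gives no proof of its own here; it just cites \cite[Th.~8.1.6]{Huffman}. Your incidence-counting argument is the standard proof of these two identities, and for part (a) it is equivalent to the $s=1$ case of Lemma~\ref{lem2.1}, as you observe.
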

\begin{definition}\cite[Def. 1.11]{Stinson}
    Let $\mathcal{D}=(V,\mathcal{X})$ be an $(n,\beta,\mu)$-BIBD, where $V=\{p_1,p_2,\ldots,p_n\}$
is the point set and
$\mathcal{X}=\{X_1,X_2,\ldots,X_b\}
$ is the collection of blocks.
The incidence matrix of $\mathcal{D}$ is an $n\times b$ binary matrix
$M=(m_{ij}),$ defined by
\begin{equation*}
     m_{ij}=\left\{ \begin{array}{cl}
     1 & \text{if } p_i\in X_j;  \\
     0 & \text{otherwise.}
 \end{array}\right.
 \end{equation*}

Note that the rows of $M$ correspond to the points of the design, the columns correspond to the blocks, and the $(i,j)$-entry indicates whether the point $p_i$ is contained in the block $X_j.$
\end{definition}
\begin{definition}\cite{Stinson}
    Let $\mathcal{D}=(V,\mathcal{X})$ be an $(n,\beta,\mu)$-BIBD with $\beta\leq n-2.$  The complementary design of $\mathcal{D},$ denoted by $\mathcal{D}^c,$  is obtained by replacing each block $B\in\mathcal{X}$ with its complement $V\setminus B.$ That is, \begin{equation*}
     \mathcal{D}^c=(V,\mathcal{X}^c),
 \end{equation*}  where  $\mathcal{X}^c=\{V\setminus B ~|~ B \in \mathcal{X}\}.$ 
\end{definition}

 \begin{theorem}\label{thm2.1}\cite[Th. 1.32]{Stinson} Let $\mathcal{D}=(V,\mathcal{X})$ be an $(n,\beta,\mu)$-BIBD with $\beta\leq n-2.$  The complementary design  $\mathcal{D}^c$ is an $(n,n-\beta,b-2r+\mu)$-BIBD, where $b$  denotes the number of blocks in $\mathcal{D}$  and $r$ is the replication number of $\mathcal{D}.$ 
     
 \end{theorem}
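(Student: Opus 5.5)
The plan is a direct count. We show that every pair of points of $V$ lies in exactly $b-2r+\mu$ blocks of $\mathcal{D}^c$. We also check that $\mathcal{D}^c$ is a legitimate BIBD with block size $n-\beta$.

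First, the block-size conditions. Each block $B\in\mathcal{X}$ has $|B|=\beta$, so $|V\setminus B|=n-\beta$. Since $\beta\le n-2$, we get $n-\beta\ge 2$. We also need $n-\beta<n$, which holds because $\beta\ge 1$. So $\mathcal{D}^c$ is a pair consisting of $n$ points and $b$ blocks of size $n-\beta$, with $1\le n-\beta<n$. Here $b$ is the number of blocks of $\mathcal{D}$, given by Lemma \ref{lem2.2}(b).

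Next, fix distinct points $x,y\in V$. A complement block $V\setminus B$ contains both $x$ and $y$ exactly when $B$ contains neither of them. Counting by inclusion--exclusion over the $b$ blocks of $\mathcal{D}$, the number of such blocks is
$$b-|\{B: x\in B\}|-|\{B: y\in B\}|+|\{B: x,y\in B\}|.$$
By Lemma \ref{lem2.2}(a), each of the middle two terms equals the replication number $r$. By the definition of an $(n,\beta,\mu)$-BIBD, the last term equals $\mu$. Hence every pair lies in exactly $b-2r+\mu$ blocks of $\mathcal{D}^c$, independent of the choice of pair. This is precisely the $2$-design condition with index $b-2r+\mu$.

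The only point needing care is that $\mathcal{X}$ is a collection, possibly with repeated blocks. The counting must therefore be done with multiplicity over the indexed list $X_1,\ldots,X_b$, and the complements $V\setminus X_j$ must likewise be taken as an indexed list. With that convention the inclusion--exclusion step goes through verbatim.

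One could also verify that the index is nonnegative, or even positive, via $b-2r+\mu=\frac{\mu(n-\beta)(n-\beta-1)}{\beta(\beta-1)}$ using Lemma \ref{lem2.2}. This is not needed, because the index is already a count of blocks. I expect no real obstacle in this argument; the main thing to get right is the multiplicity bookkeeping together with the hypothesis $\beta\le n-2$, which guarantees that complement blocks have at least two points so that the pair condition is meaningful.
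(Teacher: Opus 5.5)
Your inclusion--exclusion count is correct and is the standard argument behind \cite[Th.~1.32]{Stinson}, which the paper cites rather than proves. One small correction: under the usual convention (as in Stinson), a BIBD must have positive index, and the fact that $b-2r+\mu$ is a count of blocks only gives $b-2r+\mu\ge 0$; you therefore do need your identity $b-2r+\mu=\frac{\mu(n-\beta)(n-\beta-1)}{\beta(\beta-1)}$, which is exactly where $\beta\le n-2$ is used to get positivity.
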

 
 % \begin{theorem}\label{thm2.1}\cite[Th. 1.32]{Stinson} Let $\mathcal{D}=(V,\mathcal{X})$ be an $(n,\beta,\mu)$-BIBD with $\beta\leq n-2.$  The complementary design of $\mathcal{D},$ denoted by $\mathcal{D}^c,$  is obtained by replacing each block $B\in\mathcal{X}$ with its complement $V\setminus B.$ That is, \begin{equation*}
 %     \mathcal{D}^c=(V,\mathcal{X}^c),
 % \end{equation*}  where  $\mathcal{X}^c=\{V\setminus B ~|~ B \in \mathcal{X}\}.$ Furthermore, $\mathcal{D}^c$ is an $(n,n-\beta,b-2r+\mu)$-BIBD, where $b$  denotes the number of blocks in $\mathcal{D}$  and $r$ is the replication number of $\mathcal{D}.$ 
     
 % \end{theorem}

 Ly and Soljanin \cite{Ly} established a link between the service rate region (SRR) of linear coded storage systems and combinatorial design theory. Using the design structure of dual codes, they derived bounds on the maximum achievable service rate of each data object and showed that, for systematic codes, the upper bound is achieved when the supports of minimum-weight dual codewords form a BIBD.

 %Ly and Soljanin \cite{Ly} demonstrated the relationship between SRR and combinatorial designs  and applied their framework to characterize the maximal achievable service rates of several code families. They derived general upper and lower bounds on the maximum achievable service rate of each data object using the dual code structure. They prove that these bounds are tight for previously studied non-systematic codes and establish that, for systematic codes, the upper bound is attained when the supports of minimum-weight dual codewords form a BIBD.
\begin{lemma}\cite[Th. 2]{Ly}\label{LyandSoljanin}
 Let $\mathcal{C}$ be an $[n,k]$  linear code over $\mathbb{F}_q$ with a systematic generator matrix $\mathcal{G}.$ 
 Then, for each $i \in[k],$ the maximum achievable service rate $\lambda_i^*$ for the $i$-th data symbol of $\mathcal{C}$ satisfies $\lambda_i^* \leq 1+\frac{n-1}{d^{\perp}-1}.$ In particular, if the supports of codewords of weight $d^{\perp}$ in $\mathcal{C}^\perp$ form a BIBD, then we have $\lambda_i^* = 1+\frac{n-1}{d^{\perp}-1}.$
\end{lemma}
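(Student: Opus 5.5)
This statement is Lemma~\ref{LyandSoljanin}, the result of Ly and Soljanin, so I will prove it in two parts: an upper bound from a capacity-counting argument, and a matching allocation when the weight-$d^\perp$ supports form a BIBD. The main tool is Proposition~\ref{Prop2.1}.

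\emph{Upper bound.} Fix $i\in[k]$ and take the demand vector $\lambda_i e_i$. By Proposition~\ref{Prop2.1}, every minimal recovery set for the $i$-th symbol is either $\{i\}$ or a set $R=\mathcal{S}upp(\mathbf{d})\setminus\{i\}$ for some $\mathbf{d}\in\mathcal{C}^\perp$ with $i\in\mathcal{S}upp(\mathbf{d})$. In the second case $R\not\ni i$ and $|R|\ge d^\perp-1$.

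Let $\lambda_{i,\{i\}}=a$, and let $m$ be the total rate placed on the other recovery sets. Node $i$ has capacity $1$, so $a\le 1$. Summing the load over all $n-1$ nodes other than $i$ gives
\[
(d^\perp-1)\,m\le \sum_{R\neq\{i\}}|R|\,\lambda_{i,R}\le n-1 .
\]
Hence $\lambda_i=a+m\le 1+\frac{n-1}{d^\perp-1}$. This bound holds for every feasible request vector, so it bounds $\lambda_i^*$.

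\emph{Achievability under the BIBD hypothesis.} Suppose the supports of the weight-$d^\perp$ codewords of $\mathcal{C}^\perp$ form a $2$-$(n,d^\perp,\mu)$ design with $b$ blocks. By Lemma~\ref{lem2.2}, point $i$ lies in exactly $r=\frac{\mu(n-1)}{d^\perp-1}$ blocks. For each such block $B$, the set $B\setminus\{i\}$ is a recovery set for the $i$-th symbol by Proposition~\ref{Prop2.1}.

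The allocation is: put rate $1$ on $\{i\}$ and rate $1/\mu$ on each of the $r$ sets $B\setminus\{i\}$. For each $j\neq i$, the pair $\{i,j\}$ lies in exactly $\mu$ blocks, so node $j$ carries load $\mu\cdot\frac1\mu=1$. Node $i$ carries load $1$. Both capacity constraints are therefore met. The total rate is
\[
1+\frac{r}{\mu}=1+\frac{n-1}{d^\perp-1},
\]
which matches the upper bound and gives equality.

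The only delicate points are two. First, the reduction to minimal recovery sets is justified by Proposition III.3 of \cite{Alfarano}, which the paper has already cited. Second, the upper bound must use the fact that every non-trivial minimal recovery set avoids $i$ and has size at least $d^\perp-1$; this is exactly where systematicity enters through Proposition~\ref{Prop2.1}. I expect the achievability step to be routine once the replication number $r$ from Lemma~\ref{lem2.2} is in hand.
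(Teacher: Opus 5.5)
Your proof is correct. The paper gives no proof of this lemma; it is imported from \cite[Th. 2]{Ly}. Your achievability step is exactly the allocation the paper uses in the proof of Theorem~\ref{thm2.3}: rate $1$ on the systematic node and $1/\mu$ on each block through $i$, so every node $j\neq i$ carries load exactly $1$. Your upper bound is the standard capacity-counting argument, equivalent to the fractional cover with weight $1$ on node $i$ and $\frac{1}{d^\perp-1}$ on every other node, and it makes the statement self-contained.

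One small simplification: the sets $B\setminus\{i\}$ are already minimal recovery sets, because $B$ is the support of a minimum-weight dual codeword. So you do not need Proposition III.3 of \cite{Alfarano}.
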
 
 %\begin{theorem}\label{thm2.2}\cite[Th. 8.4.1]{Huffman} \textbf{(Assmus–Mattson)}  Let $\mathcal{C}$ be a binary $[n,k,d]$ linear code, and let $\mathcal{C}^\perp$ be its dual code with  minimum Hamming distance $d^\perp.$ Let  $A_i^\perp=A_i(\mathcal{C}^\perp)$ denote the weight distribution of $\mathcal{C}^\perp$ for $0\le i\le n.$  Fix a positive integer $t<d$, and let $s$ be the number of integers $i$ satisfying $A_i^\perp\neq 0$ for $0<i\le n-t.$ If $s\le d-t,$ then the supports of the codewords of weight $i$ in $\mathcal{C}^\perp$ form a $t$-design whenever $A_i^\perp\neq 0$ and $d^\perp\le i\le n-t.$ \end{theorem}

 \section{Maximal achievable service rates of linear codes associated with BIBDs}\label{linear}
 %In this section, we investigate the maximal achievable service rates of linear codes associated with BIBDs. We derive explicit lower bounds on the maximum achievable service rates of the corresponding data symbols. 
 Extensive research has been done on obtaining the sufficient conditions under which the supports of codewords of a fixed weight in a code or in its dual code form a design (see the Assmus–Mattson theorem and its extensions \cite[Chap. 8]{Huffman}).  The following theorem establishes a lower bound on the maximum achievable service rate for each data symbol of a systematic linear code when the supports of codewords of a fixed weight in its dual code form a BIBD. 
  Further, when the design is formed by the supports of the minimum-weight codewords of the dual code, this bound is attained \cite{Ly}.
 \begin{theorem}\label{thm2.3}
 
  Let $\mathcal{C}$ be an $[n,k]$  linear code over $\mathbb{F}_q$ with a systematic generator matrix $\mathcal{G}.$ Suppose that $A_\omega^\perp \neq 0$ for some $\omega$ satisfying $d^\perp \leq \omega \leq n,$  and the supports of  the codewords of weight  $\omega$ in $\mathcal{C}^{\perp}$  form an $(n,\omega,\mu) $-BIBD for some $\mu >0.$ Then, for each $i \in[k],$ the maximum achievable service rate $\lambda_i^*$ for the $i$-th data symbol of $\mathcal{C}$ satisfies $\lambda_i^* \geq 1+\frac{n-1}{\omega-1}.$ In particular, if $\omega=d^\perp,$ then we have 
$\lambda_i^{*}= 1+\frac{n-1}{d^\perp-1}.$
 \end{theorem}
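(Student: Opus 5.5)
Fix $i\in[k]$. The plan is to exhibit an explicit feasible allocation in $\Lambda(\mathcal{R}^{\mathrm{all}}(\mathcal{G}),1)$ that places load only on requests for the $i$-th object and achieves $\lambda_i = 1+\frac{n-1}{\omega-1}$. Since $\Lambda(\mathcal{R}^{\mathrm{all}}(\mathcal{G}),1)=\Lambda(\mathcal{R}^{\mathrm{min}}(\mathcal{G}),1)$, working with all recovery sets costs nothing.

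Let $\mathcal{X}=\{X_1,\dots,X_b\}$ be the blocks of the $(n,\omega,\mu)$-BIBD formed by the supports of the weight-$\omega$ codewords of $\mathcal{C}^\perp$. By Lemma \ref{lem2.2}, the point $i$ lies in exactly $r=\frac{\mu(n-1)}{\omega-1}$ blocks. For every pair $\{i,j\}$ with $j\neq i$, exactly $\mu$ blocks contain both points.

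For each block $X\ni i$, Proposition \ref{Prop2.1} applied to a codeword $\mathbf{d}$ with $\mathcal{S}upp(\mathbf{d})=X$ shows that $R_X=X\setminus\{i\}$ is a recovery set for the $i$-th object. The singleton $\{i\}$ is also a recovery set, since $\mathcal{G}$ is systematic.

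The allocation is as follows. Assign rate $1$ to $R=\{i\}$ and rate $1/\mu$ to each $R_X$ with $X\ni i$; all other rates, including those for every other object, are $0$. The load on server $i$ is $1$, because server $i$ lies in no $R_X$. For any server $j\neq i$, its load is $1/\mu$ times the number of blocks containing both $i$ and $j$, which is $\mu\cdot\frac1\mu=1$. So condition \eqref{eq2} holds with $\alpha=1$, and the total rate is
\[
\lambda_i = 1+\frac{r}{\mu}=1+\frac{n-1}{\omega-1}.
\]
Hence $\lambda_i^*\ge 1+\frac{n-1}{\omega-1}$.

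Two technicalities need care. First, distinct dual codewords of weight $\omega$ may share a support; they are scalar multiples of each other when $\omega=d^\perp$, but not necessarily otherwise. I will treat $\mathcal{X}$ as the multiset of supports and let the BIBD parameters count with the same multiplicity. If the same recovery set appears more than once, the rates are simply added, which changes neither the server loads nor the total. Second, I will only use the balancing property for pairs $\{i,j\}$ with $i$ fixed, and that is exactly what the 2-design condition supplies.

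For the case $\omega=d^\perp$, the bound above gives $\lambda_i^*\ge 1+\frac{n-1}{d^\perp-1}$, and Lemma \ref{LyandSoljanin} gives the matching upper bound $\lambda_i^*\le 1+\frac{n-1}{d^\perp-1}$, so equality holds. The construction is straightforward once the pair-balancing count is in hand, so the only real obstacle is the multiset and multiplicity issue just described, which the summing convention removes.
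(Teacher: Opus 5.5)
Your proposal is correct and follows essentially the same argument as the paper: unit rate on the systematic node $\{i\}$, rate $1/\mu$ on each of the $r$ recovery sets $X\setminus\{i\}$ coming from blocks through $i$, pair-balance giving load exactly $1$ on every server, and Lemma~\ref{LyandSoljanin} for the matching upper bound when $\omega=d^\perp$. The paper leaves several details implicit that you make explicit: the separate load check on server $i$, the passage between $\mathcal{R}^{\mathrm{all}}(\mathcal{G})$ and $\mathcal{R}^{\mathrm{min}}(\mathcal{G})$, and the treatment of repeated supports.
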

 \begin{proof} Fix $i\in[k].$ Since the generator matrix $\mathcal{G}$ is systematic, the $i$-th data symbol can always be accessed directly from its corresponding systematic node. As each server has unit capacity, this systematic node can contribute one unit to the service rate of the $i$-th data symbol.
By assumption, the supports of the codewords of weight $\omega$ in $\mathcal{C}^\perp$ form an $(n,\omega,\mu)$-BIBD. By Lemma \ref{lem2.2},   every point occurs in exactly
$r=\frac{\mu(n-1)}{\omega-1}$
blocks, and every pair of distinct points occurs in exactly $\mu$ blocks.
Consider the collection of all blocks containing the point $i.$ There are exactly $r$ such blocks. From Proposition \ref{Prop2.1}, each of these blocks corresponds to a recovery set for the $i$-th data symbol. Since every pair of distinct points occurs in exactly $\mu$ blocks, each node $j\neq i$ appears in precisely $\mu$ of these $r$ blocks. Assigning a load of $1/\mu$ to each block (recovery set) containing $i,$ a maximum of $\frac{r}{\mu}$ demands can be served by these available blocks. Combining this with the unit service rate supplied by the systematic node yields $\lambda_i^* \geq 1+\frac{n-1}{\omega-1}.$
Moreover, when $\omega=d^{\perp},$ by Lemma \ref{LyandSoljanin}, we get $
\lambda_i^{*}=1+\frac{n-1}{d^\perp-1},
$
which completes the proof. 

 \end{proof}
\begin{remark}
     To derive the lower bound on the maximum achievable service rate for a particular data symbol indexed by $i\in[k]$, it is not necessary that the generator matrix has to be in systematic form. The bound holds irrespective of whether the generator matrix is systematic, provided that the generator matrix contains a systematic server (node) corresponding to the $i$-th data symbol.  
\end{remark}

 A systematic generator matrix ensures that the smallest recovery set for each data symbol is a singleton set. The natural question is, how Theorem \ref{thm2.3} can be generalized for recovery sets of larger cardinality when the generator matrix is non-systematic. The following theorem addresses the case for binary codes when the supports of the codewords of a fixed weight in the dual code form a $t$-design, and the generator matrix of the code is not necessarily systematic.
\begin{theorem}\label{nonsymetriclinear}
    Let $\mathcal{C}$ be an $[n,k]$ binary linear code such that the supports of the codewords of weight $\omega$    in $\mathcal{C}^\perp$ form a $t-(n,\omega,\mu)$ design. For $i\in[k],$ if there exists a smallest recovery set of size $s$ for the $i$-th data symbol of $C$, where $1\leq s<t,$  then the maximal achievable service rate $\lambda_i^*$ of the $i$-th data symbol satisfies $\lambda_i^*\geq 1+\frac{n-s}{\omega-s}.$ In particular, if $\omega=d^{\perp},$ then we have $\lambda_i^*= 1+\frac{n-s}{d^\perp-s}.$
\end{theorem}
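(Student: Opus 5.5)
Fix $i\in[k]$ and let $S\subseteq[n]$ be a smallest recovery set for the $i$-th data symbol, with $|S|=s<t$. The plan mirrors the proof of Theorem \ref{thm2.3}, with the single systematic node replaced by the $s$-set $S$, and with the $2$-design property replaced by the $(s+1)$- and $s$-design properties supplied by Lemma \ref{lem2.1}.

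\emph{Step 1: the set $S$ itself.} The recovery set $S$ is used with load $1$. This saturates every node of $S$, since each server has unit capacity, and contributes $1$ to $\lambda_i$.

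\emph{Step 2: blocks through $S$.} Over $\mathbb{F}_2$, every vector in $\mathcal{C}^\perp$ of weight $\omega$ gives a linear relation among the columns of $\mathcal{G}$ indexed by its support. Write $e_i=\sum_{j\in S}\mathcal{G}^{(j)}$ (binary coefficients, all equal to $1$ by minimality of $S$). For a block $B\supseteq S$, i.e.\ a weight-$\omega$ dual codeword $\mathbf{d}$ with $S\subseteq\mathcal{S}upp(\mathbf{d})$, we have $\sum_{j\in B}\mathcal{G}^{(j)}=0$. Adding this relation gives
\[
e_i=\sum_{j\in B\setminus S}\mathcal{G}^{(j)},
\]
so $B\setminus S$ is a recovery set for the $i$-th symbol, of size $\omega-s$, and it is disjoint from $S$. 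By Lemma \ref{lem2.1}, the number of blocks containing the $s$-set $S$ is $\mu_s$. Each node $j\notin S$ lies in exactly $\mu_{s+1}$ of these blocks, namely the blocks containing the $(s+1)$-set $S\cup\{j\}$; this requires $s+1\le t$, which is where $s<t$ is used.

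\emph{Step 3: load assignment.} Put load $1/\mu_{s+1}$ on each $B\setminus S$. Then every node outside $S$ carries total load exactly $1$, and no node of $S$ is touched. The total rate from these sets is
\[
\frac{\mu_s}{\mu_{s+1}}=\frac{\binom{n-s}{t-s}\binom{\omega-s-1}{t-s-1}}{\binom{\omega-s}{t-s}\binom{n-s-1}{t-s-1}}=\frac{n-s}{\omega-s},
\]
which gives $\lambda_i^*\ge 1+\frac{n-s}{\omega-s}$.

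\emph{Step 4: equality when $\omega=d^\perp$.} An upper bound of the same form is needed. I would adapt the counting argument behind Lemma \ref{LyandSoljanin}. Any minimal recovery set $R\neq S$ uses at least one node... the cleaner route is this: any recovery set $R$ either contains $S$-type structure or, combined with $S$, yields a nonzero dual codeword supported in $R\cup S$. That dual codeword has weight at least $d^\perp$, which forces $|R\setminus S|\ge d^\perp-s$. Because $S$ is smallest, $|R|\ge s$ holds in every case, and the sets with $|R\setminus S|<d^\perp-s$ must consume capacity in $S$. Summing the capacity constraint over all $n$ nodes, and using that the sets meeting $S$ are charged against the $s$ units of capacity in $S$, should give
\[
\lambda_i\le 1+\frac{n-s}{d^\perp-s}.
\]
This upper bound is the main obstacle. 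The hard part is a careful case split on how $R$ meets $S$, in particular showing that sets partially overlapping $S$ cannot do better than the ratio $(n-s)/(d^\perp-s)$. If that bookkeeping fails, I would instead cite the general non-systematic upper bound of Ly and Soljanin \cite{Ly}, which I expect to be exactly of this form.
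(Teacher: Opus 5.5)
Your Steps 1--3 are exactly the paper's argument: $S$ carries load $1$, and the $\mu_s$ sets $B\setminus S$ each carry load $1/\mu_{s+1}$, using the $s$- and $(s+1)$-design properties from Lemma \ref{lem2.1}; for the equality case the paper likewise just cites Theorem 1 of Ly and Soljanin \cite{Ly}. Your own Step 4 is also nearly finished and needs no case split: for every minimal recovery set $R\neq S$, the set $R\triangle S$ supports a nonzero dual codeword, so $|R\setminus S|\ge d^\perp-s$ holds uniformly, and bounding $\lambda_{i,S}\le 1$ by the capacity of one node of $S$ while summing the capacity constraints over the $n-s$ nodes outside $S$ gives $\lambda_i\le 1+\frac{n-s}{d^\perp-s}$ directly.
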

\begin{proof}
    Let $\mathcal{G}$ be a generator matrix of $\mathcal{C}$ and $S\subset [n]$ be a smallest  recovery set of size $s$  for the $i$-th data symbol  in $\mathcal{C}$, i.e., 
    \begin{equation*}
       \sum_{j\in S}\mathcal{G}^{(j)}=e_i.
   \end{equation*}
   Note that, for  $\mathbf{c}\in\mathcal{C}^\perp$ with $S\subset\mathcal{S}upp(\mathbf{c})$, we have  \begin{equation*}
       \sum_{j\in \mathcal{S}upp(\mathbf{c})}\mathcal{G}^{(j)}=\sum_{j_1\in S}\mathcal{G}^{(j_1)}+\sum_{j_2\in \mathcal{S}upp(\mathbf{c})\setminus S}\mathcal{G}^{(j_2)}=0,\end{equation*}
     which implies that  \begin{equation*}
           \sum_{j_2\in \mathcal{S}upp(\mathbf{c})\setminus S}\mathcal{G}^{(j_2)}=e_i.   
       ~~~~~~~~~~~~~~~~~~~~~~~~~~~~~~~~~~~~~~~\end{equation*}

       This implies that the set $\mathcal{S}upp(\mathbf{c})\setminus S$ is also a recovery set for the $i$-th data symbol.  By Lemma \ref{lem2.1}, it follows that the supports of the codewords of weight $\omega$  of $\mathcal{C}^\perp$ forms an $s-(n,\omega,\mu_s)$ design, where $\mu_s=\mu\cdot \frac{\binom{n-s}{t-s}}{\binom{\omega-s}{t-s}} .$ This implies that there are exactly $\mu_s$ codewords $\mathbf{c}\in\mathcal{C}^\perp$ with $S\subset\mathcal{S}upp(\mathbf{c})$, i.e., $\mu_s$ recovery sets of cardinality $\omega-|S|$ for the $i$-th data symbol that are disjoint from $S$ (throughout this paper, $|\cdot |$ denotes the cardinality function). To determine how many of these $\mu_s$ recovery sets contain any  node other than those in $S$, we note that 
       %it is necessary to analyze the number of blocks of the underlying design that contain $(s+1)$ nodes. Since
       $s+1\leq t$, and  by Lemma \ref{lem2.1}, we see that the supports of weight $\omega$ codewords of $\mathcal{C}^\perp$ also forms an $(s+1)-(n,\omega,\mu_{s+1})$ design, where $\mu_{s+1}=\mu\cdot \frac{\binom{n-s-1}{t-s-1}}{\binom{\omega-s-1}{t-s-1}} .$  Each subset of $[n]$ of size $(s+1)$ lies in the support of $\mu_{s+1}$ codewords of $\mathcal{C}^\perp.$ 
       Consequently, there are $\mu_s$ recovery sets for the $i$-th data symbol, with each node from $[n]\setminus S$ appearing in exactly $\mu_{s+1}$ of these recovery sets. By assigning a load of $\frac{1}{\mu_{s+1}}$ to each of these recovery sets, we can serve up to a total of $\frac{\mu_s}{\mu_{s+1}}$ requests. Combining this with the unit demand handled by the set $S$, we obtain a lower bound $\lambda_i^*\geq 1+\frac{\mu_s}{\mu_{s+1}}=1+\frac{n-s}{\omega-s}.$
 In particular, if $\omega=d^\perp,$ then by Theorem $1$ of Ly and Soljanin \cite{Ly}, we have $\lambda_i^*\leq 1+\frac{n-s}{d^\perp-s}.$ This completes the proof.  
\end{proof}
%If the recovery set is of the smallest size and the minimum weight codewords of the dual form $t$-design, then by Theorem $1$ of Ly and Soljanin \cite{Ly} and Theorem \ref{nonsymetriclinear}, it follows that $\lambda_i^*=1+\frac{n-s}{w-s}.$

In the next result, we consider a special case, when the smallest recovery set for a particular data symbol has size $2,$ and the supports of the codewords of a fixed weight of dual code form an SBIBD with $\mu=2.$ 
\begin{theorem}\label{thm2.4}
     Let $\mathcal{C}$ be an $[n,k]$ linear code over $\mathbb{F}_q$ such that the $\ell$-th data symbol has a smallest recovery set of size $2$. Suppose that $A_\omega^\perp \neq 0$ for some $\omega$ satisfying $d^\perp \leq \omega \leq n,$  and the supports of  the codewords of weight $\omega$ in $\mathcal{C}^{\perp}$  form an $(n,\omega,\mu=2) $-SBIBD. Then, the maximum achievable service rate for the $\ell$-th data symbol satisfies $\lambda_\ell^* \geq 3.$ 
%      %In particular, if $\omega=d^\perp,$ then
% $\lambda_i^{*}= 1+\frac{n-1}{d^\perp-1}$ for each $i\in[k].$
 \end{theorem}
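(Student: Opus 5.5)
Write $S=\{a,b\}\subseteq[n]$ for a smallest recovery set of the $\ell$-th data symbol, so $e_\ell=\alpha\mathcal{G}^{(a)}+\beta\mathcal{G}^{(b)}$ for some $\alpha,\beta\in\mathbb{F}_q$. Minimality gives $\alpha\beta\neq0$, since otherwise a singleton would already be a recovery set. The plan mirrors the proofs of Theorems \ref{thm2.3} and \ref{nonsymetriclinear}. First, $S$ itself carries load $1$, which saturates the nodes $a$ and $b$. Second, using the design, we exhibit a family of recovery sets avoiding $a$ and $b$ whose uniform load allocation contributes at least $2$.

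\textbf{Step 1 (single-column replacement).} For a codeword $\mathbf{c}\in\mathcal{C}^\perp$ of weight $\omega$ with $a\in\mathcal{S}upp(\mathbf{c})$, the relation $\sum_j c_j\mathcal{G}^{(j)}=0$ gives
\[\mathcal{G}^{(a)}\in Span\{\mathcal{G}^{(j)}~|~j\in\mathcal{S}upp(\mathbf{c})\setminus\{a\}\}.\]
This holds over any $\mathbb{F}_q$, because only one coefficient, $c_a\neq0$, needs to be inverted. The same statement holds with $b$ in place of $a$. Hence, if $B$ is a block through $a$ but not $b$, and $B'$ is a block through $b$ but not $a$, then $(B\setminus\{a\})\cup(B'\setminus\{b\})$ is a recovery set for the $\ell$-th symbol that avoids $S$.

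\textbf{Step 2 (counting in the SBIBD).} For an $(n,\omega,2)$-SBIBD we have $b=n$, so $r=\omega$, and any two blocks meet in exactly $2$ points. The pair $\{a,b\}$ lies in exactly $2$ blocks $B_1,B_2$, and $B_1\cap B_2=\{a,b\}$. Consequently there are $\omega-2$ blocks through $a$ avoiding $b$, and $\omega-2$ blocks through $b$ avoiding $a$. A node $v\notin S$ lies in exactly $2$ blocks with $a$, and some of these may be $B_1$ or $B_2$. So $v$ appears in at most $2$ of the $a$-type blocks and at most $2$ of the $b$-type blocks.

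I would pair the $a$-type blocks with the $b$-type blocks by a bijection. This yields $\omega-2$ recovery sets, and each node $v\notin S$ lies in at most $4$ of them. Assigning load $\tfrac14$ to each gives $\tfrac{\omega-2}{4}$. I would add further sets if useful: for instance, when the restrictions of the two codewords to $(a,b)$ are linearly independent, $(B_1\cup B_2)\setminus S$ is also a recovery set. Pairing the blocks cyclically in more than one way is another option. I would then redo the per-node count to push the total from these sets to at least $2$.

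\textbf{Main obstacle.} The crude count above only yields $2$ when $\omega\ge10$. The real difficulty is a finer per-node load analysis that uses the exact intersection numbers of the SBIBD. In particular, the identity $\mu(n-1)=\omega(\omega-1)$, i.e.\ $n=1+\omega(\omega-1)/2$, constrains small $\omega$; moreover $\omega=3$ gives $n=4$ with blocks equal to complements of points. I would handle these small cases ($\omega\le 9$) by hand. For general $\omega$, I would then choose the pairing so that each node lies in at most $(\omega-2)/2$ of the chosen sets, so that load $2/(\omega-2)$ per set gives total $2$. If that fails, I would fall back on fractional weights over all $(\omega-2)^2$ pairs $(B,B')$. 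By symmetry each node $v\notin S$ then appears in at most $2(\omega-2)+2(\omega-2)$ pairs. A uniform load of $\tfrac{1}{4(\omega-2)}$ per pair gives total $\tfrac{\omega-2}{4}$, the same as the bijective pairing. So the improvement must come from exploiting the overlap $|B\cap B'|=2$ to reduce node multiplicities.
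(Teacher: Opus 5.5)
There is a genuine gap: your argument does not reach the bound. Step 2 gives only $\lambda_\ell^*\ge 1+\frac{\omega-2}{4}$, which equals $3$ only when $\omega\ge 10$. The remaining cases are deferred to an unspecified ``finer per-node analysis''. Those cases include the $(7,4,2)$ biplane in the paper's own Hamming-code example, where your bound is only $3/2$.

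The missing idea is that you do not need to glue an $a$-block to a $b$-block. Take the two blocks $B_1,B_2$ through $S=\{a,b\}$; you already found them and noted $B_1\cap B_2=S$. Each gives a recovery set $B_i\setminus S$ as soon as the dual codeword $\mathbf{c}$ supported on $B_i$ has $(c_a,c_b)=\gamma(\alpha,\beta)$ for some $\gamma\neq 0$. Indeed, then $\sum_{j\in B_i\setminus S}c_j\mathcal{G}^{(j)}=-(c_a\mathcal{G}^{(a)}+c_b\mathcal{G}^{(b)})=-\gamma e_\ell$. Over $\mathbb{F}_2$ this condition is automatic. Minimality forces $e_\ell=\mathcal{G}^{(a)}+\mathcal{G}^{(b)}$, and every binary codeword has $c_a=c_b=1$. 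Hence $S$, $B_1\setminus S$ and $B_2\setminus S$ are three pairwise disjoint recovery sets, each can carry load $1$, and $\lambda_\ell^*\ge 3$ follows with no case analysis. This is exactly the paper's proof.

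Your suspicion that this step is not automatic over $\mathbb{F}_q$ is correct; the paper uses it without comment. But this also means your plan cannot be completed in the generality you are attempting, because the statement is false there. Over $\mathbb{F}_5$, take $\mathcal{G}=\begin{bmatrix}1&1&1&1\\1&2&3&4\end{bmatrix}$.
\begin{itemize}
\item Both $\mathcal{C}$ and $\mathcal{C}^\perp$ are $[4,2,3]$ MDS codes.
\item The supports of the weight-$3$ codewords of $\mathcal{C}^\perp$ are all $3$-subsets of $[4]$, which form a $(4,3,2)$-SBIBD with $\omega=d^\perp=3$.
\item No column is a multiple of $e_1$, while any two columns span $\mathbb{F}_5^2$. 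So the smallest recovery set of the first symbol has size $2$.
\item Every recovery set uses at least two of the four unit-capacity nodes, so $\lambda_1^*\le 2<3$.
\end{itemize}
This is precisely your case $\omega=3$, which you planned to ``handle by hand''; there your pairing gives $2$, and that is optimal.

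So the repair is not a finer load analysis. You should restrict to binary codes, or add the proportionality hypothesis on $(c_a,c_b)$ for the two blocks through $S$. Under either restriction, the three-disjoint-sets argument above closes the proof immediately.
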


\begin{proof}
Let $\{i,j\}$ be a smallest recovery set for the $\ell$-data symbol of size $2$.  Since the support of  codewords  of weight $\omega$  form an $(n,\omega,2)$-SBIBD, we have two codewords $\mathbf{c}, \mathbf{c'}\in \mathcal{C}^\perp$ such that $\mathcal{S}upp(\mathbf{c})\cap\mathcal{S}upp(\mathbf{c'})=\{i,j\}$. This implies that $\mathcal{S}upp(\mathbf{c})\setminus\{i,j\}$ and $\mathcal{S}upp(\mathbf{c'})\setminus\{i,j\}$ are two disjoint recovery sets for the $\ell$-th data symbol. Each of which can serve one request, that together with the unit request served by $\{i,j\}$, we have $\lambda_\ell^*\geq3$.
\end{proof}

\begin{example}
    The supports of all the nonzero codewords of the $[7, 3]$ binary simplex code form a $(7, 4, 2)$-SBIBD. Note that every nonzero codeword of the $[7,3]$ binary simplex code has weight $4$ and the dual of this code is the $[7,4]$ binary Hamming code. Therefore, in the  $[7, 4]$ binary Hamming code, the maximum achievable service rate for a data symbol is at least $3$, provided that there exists a smallest recovery set of size $2$ for that data symbol.
\end{example}

\section{Maximal achievable service rate of linear codes obtained using the incidence matrices of BIBDs}\label{incidence}

In this section, we consider linear codes constructed from the incidence matrices of BIBDs and show how the parameters of the underlying design can be used to establish lower bounds on the maximum achievable service rates for the data symbols of the corresponding linear code.

Let $\mathcal{D}=([n],\mathcal{X})$ be an $(n,\beta,\mu)$-BIBD with the incidence matrix $M.$ The columns of $M$ correspond to the incidence vectors of the blocks, and any two distinct rows of $M$ both contain $1$ in exactly $\mu$ columns.  Now consider a binary linear code $\mathcal{C}$ whose parity-check matrix is the transpose of the incidence matrix $M$, denoted by $M^\top.$  Thus, $\mathbf{c}\cdot M=0$ for all $\mathbf{c} \in \mathcal{C}.$ Note that all rows of $M^\top$ need not be linearly independent over $\mathbb{F}_2.$  Next, we observe that each row of $M^\top$ has weight $\beta$, and each column has weight $r=\frac{\mu(n-1)}{\beta-1}$. Note that all the rows of the matrix $M^\top$ (incidence vectors of blocks of $\mathcal{D}$) are  codewords of $\mathcal{C}^\perp.$  Further, it is evident that the maximum achievable service rate for any data symbol with respect to a systematic generator matrix of $\mathcal{C}$ is at least $1+\frac{r}{\mu}.$ 

When $\mu=1,$  we see that no pair of columns of $M$ contain $1$'s at the same two positions. Equivalently,  the intersection of the supports of any two rows of $M$ contains exactly one node. This leads to the following result.
\begin{theorem}\label{LDPC}
    Let $M$ be the incidence matrix of an $(n,\beta,\mu=1)$-BIBD, $\mathcal{D}=([n],\mathcal{X})$, where $b$ denotes the number of blocks in $\mathcal{D}.$ Let $k=b-rank(M)$ and  $\mathcal{C}$ be a $[b,k]$ binary linear code whose parity-check matrix is $M.$ Then, corresponding to a systematic generator matrix  of $\mathcal{C}$, we have $$\lambda_i^*\geq 1+\beta$$
for the $i$-th data symbol, where $i \in [k].$ 
\end{theorem}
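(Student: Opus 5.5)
The plan is to reuse the argument of Theorem \ref{thm2.3}, with the roles of points and blocks swapped. The coordinates of $\mathcal{C}$ are indexed by the $b$ blocks of $\mathcal{D}$, and the rows of $M$ are indexed by the $n$ points. Since $M$ is a parity-check matrix of $\mathcal{C}$, every row of $M$ lies in $\mathcal{C}^\perp$. This holds even though the rows need not be linearly independent, because $\mathcal{C}^\perp$ is the row space of $M$. The row of $M$ for a point $p$ has support $\{j \in [b] ~|~ p \in X_j\}$, which is the set of blocks through $p$. Its size is the replication number $r$ from Lemma \ref{lem2.2}.

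Fix $i\in[k]$, and let $X_i$ be the block indexed by the $i$-th (systematic) coordinate. First, the singleton $\{i\}$ is a recovery set and carries load $1$ on server $i$ alone. Second, for each of the $\beta$ points $p\in X_i$, the row of $M$ indexed by $p$ is a dual codeword whose support contains $i$. By Proposition \ref{Prop2.1}, the set
$$R_p=\{j\in[b] ~|~ p\in X_j\}\setminus\{i\}$$
is therefore a recovery set for the $i$-th data symbol. This gives $\beta$ recovery sets, all of which avoid server $i$.

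The key step is to show that these $\beta$ recovery sets are pairwise disjoint, and this is exactly where $\mu=1$ is used. Take distinct points $p,p'\in X_i$. A block $X_j$ with $j\in R_p\cap R_{p'}$ would contain both $p$ and $p'$. Since $\mu=1$, the unique block containing $\{p,p'\}$ is $X_i$, which has been removed from both sets, so $R_p\cap R_{p'}=\emptyset$. Hence every server $v\neq i$ lies in at most one of the $R_p$, and server $i$ lies in none of them.

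We then assign load $1$ to each $R_p$ and load $1$ to $\{i\}$. Every server receives load at most $1$, so conditions \eqref{eq1} and \eqref{eq2} hold with $\lambda_i=1+\beta$, all other $\lambda_{i'}=0$ and $\alpha=1$. This shows $\lambda_i^*\ge 1+\beta$. Replacing each recovery set by a minimal recovery set it contains only lowers server loads, so the same allocation is also feasible in $\Lambda(\mathcal{R}^{\min}(\mathcal{G}),1)$.

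The main obstacle is bookkeeping rather than anything deep: one must keep the dual roles straight (coordinates are blocks, dual codewords are points), and check that Proposition \ref{Prop2.1} applies to dual codewords that need not be of minimum weight or linearly independent. The proposition only requires the codeword to lie in $\mathcal{C}^\perp$, so neither condition matters.
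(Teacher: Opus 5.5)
Your proposal is correct and follows essentially the same argument as the paper. The $\beta$ rows of $M$ indexed by the points of $X_i$ are dual codewords whose supports contain $i$; because $\mu=1$, their only common coordinate is $i$, so removing $i$ gives $\beta$ pairwise disjoint recovery sets, and these together with the systematic node yield $\lambda_i^*\geq 1+\beta$. Your version is simply more explicit than the paper about why the common block is exactly $X_i$ and about passing to minimal recovery sets.
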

\begin{proof}
Let us suppose that $\mathcal{X}=\{X_1,X_2,\ldots,X_b\}.$ Recall that the $j$-th column of $M$ is the incidence vector of the block $X_j.$  Fix $i \in [k]$. 
%For each data symbol indexed by $i\in[k]$, 
Since $|X_i|=\beta,$  exactly $\beta$ rows of $M$  have $1$ at the $i$-th coordinate position. Now, corresponding to each such row, we have a recovery set for the $i$-th data symbol. This means that there exist at least $\beta$ recovery sets for the $i$-th data symbol.
Since $\mu=1,$ every pair of distinct rows of $M$ contain $1$ in exactly one common coordinate position. This ensures that all these $\beta$ recovery sets are disjoint. Thus, the maximum service rate achievable for any data symbol of $\mathcal{C}$ is at least $1+\beta.$   
\end{proof}
\begin{remark}\label{shorten}
   Consider an $n\times b^\prime$ submatrix $M^\prime$ of the $n \times b$ incidence matrix $M$ of an $(n,\beta,\mu=1)$-BIBD. Let  $\mathcal{C}^\prime$ be the $[b^\prime, k^\prime]$ binary linear code obtained as the null space of the matrix $M',$ where $k^{\prime}=b^\prime-rank(M^\prime).$ Note that each column of $M^\prime$ has weight $\beta$, and any pair of distinct rows of $M'$ contain $1$ in at most one common coordinate position. Working as in proof of  Theorem \ref{LDPC}, it can be deduced that corresponding to a systematic generator matrix of $\mathcal{C}^\prime$, we have $\lambda_i^*\geq 1+\beta$ for all $i\in[k^\prime].$

% for each $i \in [b^\prime-rank(M^\prime)].$ Consider a $n\times b^\prime$ submatrix $M^\prime$ of the $n \times b$ incidence matrix $M$ of an $(n,\beta,\mu=1)$-BIBD, $\mathcal{D}=([n],\mathcal{X})$. The columns of $M'$ are incidence vectors corresponding to $b'$ blocks from $\mathcal{X}$. Define  $\mathcal{C}^\prime$ to be a $[b^\prime, k^\prime=b^\prime-rank(M^\prime)]$ linear code obtained as the null space of the matrix $M'$. Note that each column of $M^\prime$ has weight $\beta$, and any pair of distinct rows of $M'$ shares a 1 in exactly one coordinate position. Following a similar argument as in proof of Theorem \ref{LDPC}, it can be deduced that corresponding to a systematic generator matrix  of $\mathcal{C}^\prime$, we have $\lambda_i^*\geq 1+\beta$
% for each $i \in [k^\prime].$

\end{remark}
When the row weights and column weights of the parity-check matrix are very small compared to the length of the code, then the corresponding matrix is said to be a sparse parity-check matrix. For an $(n,\beta,\mu=1)$-BIBD, the incidence matrix $M$  has constant column-weight $\beta$ and constant row-weight $r$.  When $n$ is large compared to   $\beta$ and $r,$  then $M$ is a sparse matrix.   In fact, $M$ exhibits all the structural properties of a parity-check matrix for a regular LDPC code \cite{GalH}.  Consequently, the null space of $M$ defines a $(\beta; r)$-regular LDPC code of length $b$, known as a BIBD-LDPC code.
% The row weight $r$ and the column weight $\beta$ are very small compared to the length of the code, then the corresponding matrix is a sparse parity-check matrix. For a BIBD with the constraint $\mu=1$, its incidence matrix $M$ exhibits all the structural properties of a parity-check matrix for a regular LDPC code \cite{GalH}. The matrix $M$ has constant column-weight $\beta$, and constant row-weight $r$. Consequently, the null space of $M$ defines a $(\beta; r)$-regular low-density parity-check (LDPC) code, known as a BIBD-LDPC code, with length $b$. }
 The Tanner graph of a BIBD-LDPC code contains no cycles of length $4$ (In Tanner's framework, a $4$-cycle exists exactly when two variable nodes are connected to the same two check nodes, which means two columns of parity-check matrix share two common $1$'s.), meaning its girth (length of smallest cycle in Tanner graph) is at least $6$ \cite{Tanner}. The short cycles in the Tanner graph often indicate the dependencies among the recovery sets that reduce the number of disjoint recovery sets. Therefore, LDPC codes with large girth are often desirable.
Moreover, if the incidence matrix $M$ is a fully circulant square matrix ($n\times n$ matrix whose rows are cyclic shifts of the row above it), then the code with parity-check matrix $M$ is a cyclic LDPC code.

%the dual code as well as the LDPC code is cyclic.

\begin{remark}
    In an $(n,\beta,\mu=1)$-BIBD with a circulant square incidence matrix, the number of blocks equals the number of nodes, i.e., $b=n$. Hence, it is essentially an SBIBD. For an $(n,\beta,\mu=1)$-SBIBD, the incidence matrix $M$ adheres to the constant row-column constraints and can therefore be taken as a parity-check matrix of an LDPC code. Since $\mu=1,$ by Theorem 1.17 of \cite{Stinson}, we see that the dual design of this SBIBD is also an SBIBD with identical parameters. The matrix $M^\top$ likewise satisfies all the criteria for being a parity-check matrix of an LDPC code. Therefore, the maximum achievable service rate for each data symbol corresponding to a systematic generator matrix of both these LDPC codes is at least $1+\beta.$
\end{remark}
Now we will obtain explicit lower
bounds on the maximum achievable service rates for each data symbol of some classes of BIBD-LDPC codes that are constructed using BIBDs.

\begin{enumerate}
\item[(I)] Let $M$ be the incidence matrix of an $(n, 3,1)$-BIBD. MacKay and Davey \cite{STS} constructed LDPC codes of length $\frac{n(n-1)}{6}$  and dimension $\frac{n(n-1)}{6}- rank(M)$  whose parity-check matrix is the  matrix $M$.  By Theorem \ref{LDPC}, for these LDPC codes with a systematic generator matrix, we have $\lambda_i^*\geq 4$ for each $i\in[\frac{n(n-1)}{6}-rank(M) ].$

  \item[(II)] Bose \cite{Bose} constructed some BIBDs using the
method of symmetrically repeated differences, which are called Bose-BIBDs. Ammar {\etal}~\cite{Ammar} constructed several classes of LDPC codes with new parameters  using Bose-BIBDs. 

Now, we  consider some of these classes of LDPC codes and determine a lower bound on the maximum achievable service rate for each data symbol corresponding to a
systematic generator matrix of these LDPC codes.

\begin{enumerate}
\item[(a)]   Ammar {\etal}  \cite{Ammar} constructed Type-$1$ and Type-$2$ Class-I Bose  BIBDs with parameters $(12t+1,4,\mu=1)$ and $(20t+1,5,\mu=1),$ respectively. In both Type-$1$ and Type-$2$ Class-I Bose BIBDs, the incidence matrix $Q=\begin{bmatrix}
        Q_1&Q_2&\cdots&Q_t
    \end{bmatrix}$ consists of $t$ circulant  square matrices $Q_1,Q_2,\ldots,Q_t$ of order $12t+1$ and  $20t+1,$ respectively.
 The Class-I BIBD LDPC codes corresponding to Type-$1$ Class-I Bose BIBDs have parameters $[m(12t+1),(m-1)(12t+1)]$ and the Class-I BIBD LDPC codes corresponding to Type-$2$ Class-I Bose BIBDs  have parameters $[m(20t+1),(m-1)(20t+1)]$ which are obtained as the null space of the matrix 
$Q[m]=\begin{bmatrix}
        Q_1&Q_2&\cdots&Q_m
    \end{bmatrix}$ for some $1\le m\le t.$ 
    \\ Now, by Remark \ref{shorten}, we see that  $\lambda_i^*$  satisfies $\lambda_i^*\geq 5$ for  each $i$-th data symbol of the  Class-I BIBD LDPC codes corresponding to Type-$1$ Class-I Bose BIBD  with a systematic generator matrix.   
 Again, by Remark \ref{shorten}, we note that  $\lambda_i^*\geq 6$ for  each $i$-th data symbol of the  Class-I BIBD LDPC codes corresponding to Type-$2$ Class-I Bose BIBD  with a systematic generator matrix.

% The incidence matrix of a class-I
% Bose-BIBD consists of $t$ circulant square matrices. Among those $t$, if any $m$ are selected and a matrix is defined in the following way-
% \[H^{(1)}[m]=[Q_1,Q_2,\ldots,Q_m].\]

% Clearly, $H^{(1)}[m]$ is a parity-check matrix of a regular $(4, 4m)$-LDPC code with $\lambda_i^*\geq 5$ corresponding to a systematic generator matrix, for Type-$1$ design. Similarly, for the Type-$2$ design, the code obtained is $(5, 5m)$-LDPC code with $\lambda_i^*\geq 6$.
%    
\item[(b)] For any positive integer $t,$  Ammar {\etal}  \cite{Ammar} described $(12t+4, 4,1)$-Class-II Bose BIBDs based on Bose designs \cite{Bose}. They constructed  Class-II BIBD-LDPC codes of length $(3t+1)(4t+1) $ whose parity-check matrix  $M_2$  is the incidence matrix of a $(12t+4, 4,1)$-Class-II Bose BIBD.    By Theorem \ref{LDPC},  we see that for  Class-II BIBD-LDPC code  of length $(3t+1)(4t+1) $ and dimension $(3t+1)(4t+1)-rank(M_2)$ with a systematic generator matrix, $\lambda_i^*\geq 5$ holds for each data symbol.

% \textcolor{blue}{The Class-II Bose Designs, have the $ (12t+4)\times(3t+1)(4t+1) $ incidence matrix $Q$, with parameters $(n=12t+4,b=(3t+1)(4t+1), \beta=4, r=4t+1,\lambda=1)$. Therefore, for the Class-II BIBD-LDPC code, the maximum achievable service rate of any data symbol with respect to a systematic generator matrix of the LDPC is at least $1+\beta=5$.}
\item[(c)]  
For any positive integer $t,$ Ammar {\etal}  \cite{Ammar} constructed  Class-III BIBD-LDPC codes of length $(5t+1)(4t+1) $ whose parity-check matrix  $M_3$  is the incidence matrix of  a $(20t+5, 5,1)$-Class-III Bose BIBD.    By Theorem \ref{LDPC},  we see that for  Class-III BIBD-LDPC code  of length $(5t+1)(4t+1) $ and dimension $(5t+1)(4t+1)-rank(M_3)$ with a systematic generator matrix,  we have $\lambda_i^*\geq 6$ for each data symbol. 

% \textcolor{blue}{Class-III BIBD-LDPC Codes are also obtained by taking the incidence matrix of the BIBD with parameters $(v=20t+5,b=(5t+1)(4t+1),
% r=5t+1, \beta=5, \lambda=1)$. Then this Class-III BIBD-LDPC Code has $\lambda_i^*\geq 6$ corresponding to a systematic generator matrix.}

\end{enumerate}
\end{enumerate}
Although the bound given in Theorem \ref{LDPC} is comparatively small (due to only a subset of available recovery sets being considered) than the code length, it provides a guaranteed level of service rate irrespective of any specific request allocation strategy. One can find similar bounds for the existing constructions of LDPC codes, which are constructed using BIBDs and difference sets \cite{Johnson,Lan,  Vasic}.

\section{Maximal achievable service rate  of extended linear codes}\label{extended}
Let $\mathcal{C}$ be an $[n, k]$  linear code over $\mathbb{F}_q$ with the minimum Hamming distance $d$. The extended code $\overline{\mathcal{C}}$ of the code $\mathcal{C},$ is constructed by appending to each codeword $\mathbf{c}=(c_1,c_2,\ldots,c_n)\in \mathcal{C}$ an additional coordinate $c_{n+1}=-\sum\limits_{i=1}^{n}c_i.$ 
% Thus, the extended code $\overline{\mathcal{C}}$ of the code $\mathcal{C}$ is defined as  \begin{equation*}
%     \overline{\mathcal{C}}=\big\{(c_1,c_2,\ldots,c_n,-\sum\limits_{i=1}^{n}c_i )~|~ (c_1,c_2,\ldots, c_n)\in \mathcal{C}\big\}. 
% \end{equation*} 
The $(n+1)$-th coordinate of $\overline{\mathcal{C}}$ is called the overall parity-check coordinate. Note that the extended code $\overline{\mathcal{C}}$ is an $[n+1,k]$ linear code over $\mathbb{F}_q$ with the minimum Hamming distance $\overline{d}\in\{d,d+1\}.$ Next, let $\mathcal{G}=\begin{bmatrix}
   \mathcal{G}^{(1)}& \mathcal{G}^{(2)}&\cdots & \mathcal{G}^{(n)} 
\end{bmatrix}$ be a generator matrix of $\mathcal{C},$ where $\mathcal{G}^{(i)}$ denotes the $i$-th column of $\mathcal{G}$, $1\leq i \leq n.$  Then the generator matrix $\overline{\mathcal{G}}$ of the extended code $\overline{\mathcal{C}}$  is obtained by appending an additional column $\mathcal{G}^{(n+1)}$ to $\mathcal{G},$ where $\mathcal{G}^{(n+1)}=-\sum\limits_{i=1}^{n}\mathcal{G}^{(i)}.$ Thus, \begin{equation}\label{extendmatrix}
\overline{\mathcal{G}}=\begin{bmatrix} \mathcal{G}^{(1)}& \mathcal{G}^{(2)}&\cdots & \mathcal{G}^{(n)} & \mathcal{G}^{(n+1)} \end{bmatrix}.  
\end{equation}

Next, we observe that if the $(n+1)$-th coordinate of every codeword in the extended code $\overline{\mathcal{C}}$ is zero, then $\overline{\mathcal{C}}$ is simply the code $\mathcal{C}$  with an appended zero coordinate. Consequently, the service rate region of $\overline{\mathcal{C}}$ coincides with that of $\mathcal{C}$. Therefore, the study of the service rate region of the extended code is meaningful only when there exists at least one codeword in $\overline{\mathcal{C}}$ whose $(n+1)$-th coordinate is nonzero. If $\mathcal{C}$ is a binary linear code, then the extended code $\overline{\mathcal{C}}$ contains a codeword with a nonzero $(n+1)$-th coordinate if and only if $\mathcal{C}$ contains a codeword of odd Hamming weight. Further, we observe that every recovery set for a data symbol of $\mathcal{C}$ remains a recovery set for the corresponding data symbol of $\overline{\mathcal{C}}.$ Thus,  the extension preserves all recovery sets of the original code. Hence, for $i\in[k],$ the maximal achievable service rate $\overline{\lambda}_{i}^{*}$ for any $i$-th data symbol  corresponding to $\overline{ \mathcal{G}}$ is at least $\lambda_i^*$, which is the maximal achievable service rate for the $i$-th data symbol corresponding to $ \mathcal{G}.$

\begin{lemma}\label{lemma1}
  Let $\mathcal{C}$  be a linear code over $\mathbb{F}_q$ with a generator matrix $\mathcal{G}$  and let $\overline{\mathcal{C}}$  be its extended code with the generator matrix $\overline{\mathcal{G}}$.  Then, $\mathcal{R}^{min}(\mathcal{G})\subseteq\mathcal{R}^{min}(\overline{\mathcal{G}})$ and $\Lambda(\mathcal{R}^{\mathrm{min}}(\mathcal{G}),1)\subseteq\Lambda(\mathcal{R}^{\mathrm{min}}(\overline{\mathcal{G}}),1)$. 

\end{lemma}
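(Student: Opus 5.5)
The plan is to prove the two inclusions in order, the first being the substantive one and the second following from it directly from the definition of the service rate region.

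\textbf{First inclusion.} Fix $i\in[k]$ and let $R\in\mathcal{R}_i^{\min}(\mathcal{G})$. Since $\overline{\mathcal{G}}$ restricted to the columns indexed by $[n]$ is exactly $\mathcal{G}$, we have
\[
Span\{\overline{\mathcal{G}}^{(j)} ~|~ j\in R\}=Span\{\mathcal{G}^{(j)} ~|~ j\in R\}\ni e_i ,
\]
so $R\in\mathcal{R}_i^{\mathrm{all}}(\overline{\mathcal{G}})$.

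For minimality, suppose $R'\subsetneq R$ were a recovery set for the $i$-th symbol with respect to $\overline{\mathcal{G}}$. Because $R'\subseteq R\subseteq[n]$, the set $R'$ does not contain the index $n+1$. Hence $R'$ uses only columns shared with $\mathcal{G}$, and so $R'\in\mathcal{R}_i^{\mathrm{all}}(\mathcal{G})$. This contradicts the minimality of $R$ in $\mathcal{G}$. Therefore $R\in\mathcal{R}_i^{\min}(\overline{\mathcal{G}})$, which gives $\mathcal{R}_i^{\min}(\mathcal{G})\subseteq\mathcal{R}_i^{\min}(\overline{\mathcal{G}})$ for each $i$, i.e.\ componentwise inclusion. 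The only point needing care is this observation that any subset of a recovery set inside $[n]$ avoids the new coordinate; I expect it to be the main (mild) obstacle.

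\textbf{Second inclusion.} Take $\lambda\in\Lambda(\mathcal{R}^{\min}(\mathcal{G}),1)$ with an allocation $\{\lambda_{i,R}\}_{R\in\mathcal{R}_i^{\min}(\mathcal{G})}$ satisfying \eqref{eq1} and \eqref{eq2}. Extend this allocation to $\mathcal{R}^{\min}(\overline{\mathcal{G}})$ by keeping the same values on $\mathcal{R}_i^{\min}(\mathcal{G})$ and setting $\lambda_{i,R}=0$ on the new sets in $\mathcal{R}_i^{\min}(\overline{\mathcal{G}})\setminus\mathcal{R}_i^{\min}(\mathcal{G})$. We then check the two conditions:
\begin{itemize}
\item[(a)] \eqref{eq1} is unchanged, since the added terms are all zero.
\item[(b)] For a server $v\in[n]$, the load is identical to the load under $\mathcal{G}$, so it is at most $1$.
\item[(c)] For the server $v=n+1$, the load is $0$, because every set carrying positive load lies in $[n]$.
\end{itemize}
Hence $\lambda\in\Lambda(\mathcal{R}^{\min}(\overline{\mathcal{G}}),1)$, which proves $\Lambda(\mathcal{R}^{\min}(\mathcal{G}),1)\subseteq\Lambda(\mathcal{R}^{\min}(\overline{\mathcal{G}}),1)$.
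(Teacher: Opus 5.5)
Your proof is correct and follows the paper's own reasoning: the paragraph just before the lemma observes that $\overline{\mathcal{G}}$ keeps the columns of $\mathcal{G}$ in positions $1,\ldots,n$, so every recovery set of $\mathcal{C}$ remains a recovery set of $\overline{\mathcal{C}}$. You go further than the paper by explicitly checking that minimality is preserved (a proper subset of $R\subseteq[n]$ cannot involve coordinate $n+1$) and by extending the load allocation by zero, both of which the paper leaves implicit.
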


\begin{remark}\label{rem3.1}
Suppose $w_H(\mathcal{G}^{(n+1)})=1 $ and $\mathcal{S}upp(\mathcal{G}^{(n+1)})=\{i\},$ where $i\in[k].$  Then the appended coordinate in $\overline{\mathcal{C}}$ forms a new recovery set for the $i$-th data symbol. Hence, the maximum achievable service rate for the $i$-th data symbol satisfies $\overline{\lambda}_{i}^*=\lambda_i^*+1.$ 
\end{remark}

By Lemma \ref{LyandSoljanin}, we observe that for a systematic generator matrix of an $[n,k]$ linear code $\mathcal{C},$ the associated achievable service rate for the $i$-th data symbol satisfies $\lambda_{i}\leq 1+\frac{n-1}{d^\perp-1}.$  Similarly, by Lemma \ref{LyandSoljanin}, for the extended code $\overline{\mathcal{C}}$ with the generator matrix $\overline{\mathcal{G}}$, we see that  the associated achievable service rate for the $i$-th data symbol satisfies $\overline{\lambda}_{i}\leq 1+\frac{n}{\overline{d}^\perp-1}$, where $\overline{d}^\perp$ denotes the minimum Hamming distance of the dual code $\overline{\mathcal{C}}^\perp.$ Comparing these bounds, we observe that the extension may increase the maximum achievable service rate. In particular, when $\overline{d}^{\perp} \le d^{\perp},$ we may have
$\lambda_i^* < \overline{\lambda}_{i}^{*}.$ We demonstrate this with the following example.
\begin{example}

Let $C$ be the $[23,12]$ binary  Golay code with a systematic generator matrix. Its dual code $C^\perp$ is a $[23,11]$ binary code with $d_H(C^\perp)=8$.
%By \textcolor{red}{Theorem ?? of \cite{??},} 
The supports of the minimum-weight codewords of $C^\perp$ form a $(23,8,56)$-BIBD. Therefore, Lemma \ref{LyandSoljanin}, the maximum achievable service rate for each data symbol satisfies $\lambda_i^*=1+\frac{22}{7}=4.14$ for $1\leq i \leq 12.$ Now consider the extended binary Golay code $\overline{C}$, which is a $[24,12]$ self-dual code with $d_H(\overline{C})=8.$ By Example 8.3.3 of \cite{Huffman}, the supports of the minimum-weight codewords of $\overline{C}^\perp$ form a $(24,8,77)$-BIBD. Applying Lemma \ref{LyandSoljanin} again, we obtain
$\overline{\lambda}_i^*=1+\frac{23}{7}=4.28$ for $1\leq i \leq 12.$  Note that $\lambda_i^* < \overline{\lambda}_i^*$ for  $1\leq i \leq 12$. This implies that the extended binary Golay code achieves a higher maximum service rate for every data symbol than the binary Golay code.
\end{example}

The following theorem provides a sufficient condition under which the minimum Hamming distance of the dual of the extended code does not exceed the minimum Hamming distance of the dual of the original code.
\begin{theorem}\label{thm3.1}
Let $\mathcal{C}$  be an $[n,k]$ linear code over $\mathbb{F}_q.$
%and let $d^\perp$ and $\overline{d}^\perp$ denote the minimum distances of $\mathcal{C}^\perp$  and $\overline{\mathcal{C}}^\perp,$ respectively. 
Assume that $\mathcal{C}^\perp$ contains a codeword having $s$ coordinates, all equal to the same element of  $\mathbb{F}_q^{*},$ where $s\geq n+1-d^\perp.$ Then we have $\overline{d}^\perp\leq d^\perp.$
\end{theorem}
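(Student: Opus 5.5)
The plan is to describe the dual of the extended code explicitly and then build one low-weight codeword in it from the given dual codeword. Since $\overline{\mathcal{G}}$ in \eqref{extendmatrix} is obtained by appending $\mathcal{G}^{(n+1)}=-\sum_{i=1}^n\mathcal{G}^{(i)}$, a vector $(\mathbf{v},v_{n+1})\in\mathbb{F}_q^{n+1}$ lies in $\overline{\mathcal{C}}^\perp$ if and only if $\sum_{i=1}^n v_i\mathcal{G}^{(i)}-v_{n+1}\sum_{i=1}^n\mathcal{G}^{(i)}=0$. Equivalently, $\mathbf{v}-v_{n+1}\mathbf{1}\in\mathcal{C}^\perp$, where $\mathbf{1}$ denotes the all-ones vector of length $n$. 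Thus
\[
\overline{\mathcal{C}}^\perp=\{(\mathbf{d}+a\mathbf{1},\,a)~|~\mathbf{d}\in\mathcal{C}^\perp,\ a\in\mathbb{F}_q\}.
\]
In particular, $\mathcal{C}^\perp\times\{0\}\subseteq\overline{\mathcal{C}}^\perp$. This inclusion already gives $\overline{d}^\perp\le d^\perp$ trivially.

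To use the hypothesis, as the statement presumably intends, I take the given codeword $\mathbf{d}\in\mathcal{C}^\perp$ with $s$ coordinates all equal to some $\gamma\in\mathbb{F}_q^*$, and I choose $a=-\gamma$. The resulting vector $(\mathbf{d}-\gamma\mathbf{1},-\gamma)$ lies in $\overline{\mathcal{C}}^\perp$. It vanishes on those $s$ coordinates and is nonzero in position $n+1$, so it is a nonzero codeword. Its weight is at most $(n-s)+1$, and the hypothesis $s\ge n+1-d^\perp$ gives $(n-s)+1\le d^\perp$. Hence $\overline{d}^\perp\le n-s+1\le d^\perp$.

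The only step needing care is the characterization of $\overline{\mathcal{C}}^\perp$, namely the sign convention in the overall parity coordinate. I will check it directly against $\overline{\mathcal{G}}$ rather than quote it. If only the trivial inclusion is wanted, that step is unnecessary, but the explicit codeword shows exactly where the hypothesis on $s$ is used.
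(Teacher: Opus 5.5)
Your proof is correct, and its second paragraph is the paper's own argument. The paper notes that the all-one vector of length $n+1$ lies in $\overline{\mathcal{C}}^\perp$ and forms $(\mathbf{c},0)-a\mathbf{1}$, whose weight is $n-s+1\le d^\perp$. Your description $\overline{\mathcal{C}}^\perp=\{(\mathbf{d}+a\mathbf{1},a)\mid \mathbf{d}\in\mathcal{C}^\perp,\ a\in\mathbb{F}_q\}$ justifies that membership. Your bound ``at most $n-s+1$'' is also slightly more careful than the paper's ``exactly $n-s+1$'', which tacitly needs exactly $s$ coordinates equal to $a$.

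The genuinely different ingredient is your remark that $(\mathbf{d},0)\in\overline{\mathcal{C}}^\perp$ for every $\mathbf{d}\in\mathcal{C}^\perp$. This holds because $(\mathbf{d},0)\cdot(\mathbf{c},-\sum_i c_i)=\mathbf{d}\cdot\mathbf{c}=0$. It gives $\overline{d}^\perp\le d^\perp$ whenever $\mathcal{C}^\perp\neq\{0\}$, so the hypothesis $s\ge n+1-d^\perp$ is not needed for the conclusion as stated. By the same token, the extra hypothesis in Theorem~\ref{thm3.5} is superfluous: under that theorem's other assumptions, $1+\frac{n-1}{d^\perp-1}<1+\frac{n}{\overline{d}^\perp-1}$ already follows from $\overline{d}^\perp\le d^\perp$.

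What the hypothesis-based construction adds is finer information: a dual codeword of weight at most $n-s+1$ whose support contains the new coordinate $n+1$. This gives the sharper bound $\overline{d}^\perp\le n-s+1$, which is a strict improvement when $s>n+1-d^\perp$. It is also the kind of codeword needed to build recovery sets through the new node, as in Theorem~\ref{thm2}.
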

\begin{proof} Let
$\mathbf{c}=(c_1,c_2,\ldots,c_n)\in \mathcal{C}^\perp$ be such that exactly $s$ coordinates of $\mathbf{c}$ are equal to $a\in\mathbb{F}_q^*.$ Note that  the all-one vector $\mathbf{1}=(1,1,\ldots,1)$ belongs to $\overline{\mathcal{C}}^{\perp}.$ This implies that $\mathbf{c}^\prime=(\mathbf{c},0)+(-a)\mathbf{1}=(c_1-a,c_2-a,\ldots,c_n-a,-a)\in \overline{\mathcal{C}}^{\perp}.$ 
Since $\mathbf{c}$  has exactly $s$ coordinates equal to $a,$ we have $w_H(\mathbf{c}^\prime)=n-s+1.$ Further, we observe that $\overline{d}^{\perp} \leq w_H(\mathbf{c}^\prime)=n-s+1\leq d^{\perp},$ which completes the proof. 
\end{proof}

\begin{corollary}\label{coro3.1}
Let $\mathcal{C}$ be a  %$[n,k,d]$ 
binary linear code of length $n.$
%and let $d^\perp$ and $\overline{d}^\perp$ denote the minimum distances of $\mathcal{C}^\perp$ and $\overline{\mathcal{C}}^\perp,$ respectively.
If $d^\perp \geq \frac{n+1}{2},$
then we have 
$\overline{d}^\perp \leq d^\perp.$
\end{corollary}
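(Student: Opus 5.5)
The plan is to derive this directly from Theorem \ref{thm3.1}, specialised to $q=2$. In that case $\mathbb{F}_2^{*}=\{1\}$, so the hypothesis of Theorem \ref{thm3.1} reads: some codeword of $\mathcal{C}^\perp$ has at least $n+1-d^\perp$ coordinates equal to $1$. Equivalently, $\mathcal{C}^\perp$ must contain a codeword $\mathbf{c}$ with $w_H(\mathbf{c})\geq n+1-d^\perp$. So the whole task is to exhibit such a codeword from the assumption $d^\perp\geq \frac{n+1}{2}$.

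First I dispose of the degenerate case. If $\mathcal{C}^\perp=\{\mathbf{0}\}$, then $\mathcal{C}$ is the full space $\mathbb{F}_2^n$ and the statement is vacuous or conventional. This is excluded in the paper's setting anyway, since $k<n$. Otherwise, I take any nonzero codeword $\mathbf{c}\in\mathcal{C}^\perp$ of minimum weight, so that $w_H(\mathbf{c})=d^\perp$. The hypothesis $d^\perp\geq\frac{n+1}{2}$ is equivalent to $2d^\perp\geq n+1$, that is, $d^\perp\geq n+1-d^\perp$. Hence $\mathbf{c}$ has $s=d^\perp\geq n+1-d^\perp$ coordinates equal to $1$, and Theorem \ref{thm3.1} gives $\overline{d}^\perp\leq d^\perp$.

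As a sanity check, I would redo the argument explicitly. The all-one vector of length $n+1$ lies in $\overline{\mathcal{C}}^\perp$, and $(\mathbf{c},0)+\mathbf{1}$ is the complement of $(\mathbf{c},0)$. Its weight is $n+1-d^\perp$, which is at most $d^\perp$ under our assumption.

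There is essentially no obstacle here. The only point needing care is checking that in the binary case "coordinates all equal to the same nonzero element" just means the support of $\mathbf{c}$. I would also verify that the complemented vector is nonzero so that it genuinely bounds $\overline{d}^\perp$. This holds because its last coordinate is $1$.
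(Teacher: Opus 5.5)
Your proposal is correct and takes the same route as the paper. The paper states this corollary without a separate proof, as an immediate consequence of Theorem~\ref{thm3.1} applied to a minimum-weight codeword of $\mathcal{C}^\perp$, whose nonzero entries are all $1$ over $\mathbb{F}_2$, giving $s=d^\perp\geq n+1-d^\perp$; your explicit check via $(\mathbf{c},0)+\mathbf{1}$, of weight $n+1-d^\perp$, is just the proof of Theorem~\ref{thm3.1} specialised to $q=2$.
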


\begin{theorem}\label{thm3.5}
   Let $\mathcal{C}$ be an $[n,k]$  
   binary linear code. Suppose that the supports of the minimum weight codewords of $\mathcal{C}^\perp$  and  $\overline{\mathcal{C}}^\perp$ form BIBDs  with minimum weight $d^\perp>1$ and $\overline{d}^{\perp}>1,$ respectively.  If $\mathcal{C}^{\perp}$ contains a codeword of weight at least $n+1-d^\perp,$  then $\lambda_{i}^*<\overline{\lambda}_{i}^*$  for $1 \leq i \leq k.$ 
\end{theorem}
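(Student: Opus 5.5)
The plan is to compute both maximal rates exactly with Lemma \ref{LyandSoljanin} and compare the two closed forms, using Theorem \ref{thm3.1} to control $\overline{d}^\perp$. As in the discussion preceding the statement, I take $\mathcal{C}$ to be given by a systematic generator matrix $\mathcal{G}$, since that is the setting in which Lemma \ref{LyandSoljanin} applies.

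First, I check that the extended matrix $\overline{\mathcal{G}}$ in \eqref{extendmatrix} is again systematic. It is obtained from $\mathcal{G}$ by appending the single column $\mathcal{G}^{(n+1)}=-\sum_{j}\mathcal{G}^{(j)}$, so its first $k$ columns are still the $k\times k$ identity. Hence Lemma \ref{LyandSoljanin} applies to $\overline{\mathcal{C}}$, a code of length $n+1$ whose dual has minimum distance $\overline{d}^\perp$. By hypothesis, the minimum-weight supports of $\mathcal{C}^\perp$ form a BIBD on $n$ points, and those of $\overline{\mathcal{C}}^\perp$ form a BIBD on $n+1$ points. The equality case of Lemma \ref{LyandSoljanin} therefore gives, for each $i\in[k]$,
\[
\lambda_i^*=1+\frac{n-1}{d^\perp-1},\qquad \overline{\lambda}_i^*=1+\frac{n}{\overline{d}^\perp-1}.
\]
Both denominators are positive because $d^\perp>1$ and $\overline{d}^\perp>1$.

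Next, I compare $\overline{d}^\perp$ with $d^\perp$. Since the code is binary, a codeword of $\mathcal{C}^\perp$ of weight $w\ge n+1-d^\perp$ has exactly $s=w$ coordinates equal to the same nonzero element $1\in\mathbb{F}_2^*$. Theorem \ref{thm3.1} then applies with this $s$ and yields $\overline{d}^\perp\le d^\perp$. Concretely, adding the all-one vector of $\overline{\mathcal{C}}^\perp$ to $(\mathbf{c},0)$ produces a dual codeword of weight $n-w+1\le d^\perp$.

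Finally, I combine the two steps: $\overline{d}^\perp-1\le d^\perp-1$, so
\[
\frac{n}{\overline{d}^\perp-1}\ \ge\ \frac{n}{d^\perp-1}\ >\ \frac{n-1}{d^\perp-1},
\]
which gives $\overline{\lambda}_i^*>\lambda_i^*$ for every $1\le i\le k$. The strict inequality comes from the numerator growing from $n-1$ to $n$, so even the case $\overline{d}^\perp=d^\perp$ yields a strict increase.

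The main point needing care is the applicability of Lemma \ref{LyandSoljanin} to the extended code. This requires two things: that $\overline{\mathcal{G}}$ is systematic, which is handled above, and that the BIBD hypothesis for $\overline{\mathcal{C}}^\perp$ is on the full point set $[n+1]$, which is exactly what the statement assumes. Once these are secured, the rest is the elementary comparison above.
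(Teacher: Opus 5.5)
Your proposal is correct and follows the paper's route. The paper's proof consists only of combining Lemma \ref{LyandSoljanin} (equality case, applied to both $\mathcal{C}$ and $\overline{\mathcal{C}}$) with Theorem \ref{thm3.1} (giving $\overline{d}^\perp\le d^\perp$); you write these steps out in full, including the implicit systematic-generator assumption, the fact that $\overline{\mathcal{G}}$ stays systematic, and the observation that strictness comes from the numerator growing from $n-1$ to $n$.
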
  
\begin{proof}
   The desired result follows immediately from Lemma \ref{LyandSoljanin}  and Theorem \ref{thm3.1}. 
\end{proof}

The following theorem identifies a sufficient condition under which the service rates for two data symbols improve simultaneously in the extended code. 

\begin{theorem}\label{thm2}
Let $\mathcal{C}$  be an $[n,k]$ linear code over $\mathbb{F}_q$ with a systematic generator matrix, and let $\overline{\mathcal{C}}$  be its extended code.  Suppose that there exists a codeword $\mathbf{c}\in \mathcal{C}^\perp$ such that all nonzero coordinates of $\mathbf{c}$ are equal, $i\in\mathcal{S}upp(\mathbf{c})$ and $j\notin\mathcal{S}upp(\mathbf{c})$ for two distinct data symbol indices $i,j\in [k].$  Then 
\begin{equation*}
   \overline{\lambda}_{i}^*\geq 2, ~\overline{\lambda}_{j}^*\geq 2 \text{ and }(\overline{\lambda}_{i}+\overline{\lambda}_{j})^*\geq 4, \text{ where } (\overline{\lambda}_{i}+\overline{\lambda}_{j})^*=\max\{\overline{\lambda}_{i}+\overline{\lambda}_{j}\mid (\overline{\lambda}_1,\overline{\lambda}_2,\ldots,\overline{\lambda}_k)\in\Lambda(\overline{\mathcal{G}})\}.
\end{equation*} \end{theorem}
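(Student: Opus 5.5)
Since the generator matrix $\mathcal{G}$ is systematic, $\overline{\mathcal{G}}$ in \eqref{extendmatrix} is also systematic, so $\{i\}$ and $\{j\}$ are recovery sets for the $i$-th and $j$-th data symbols of $\overline{\mathcal{C}}$. The plan is to find, using $\mathbf{c}$ and the all-one vector, one more recovery set for $i$ and one for $j$ that are disjoint from each other and from $\{i\}$ and $\{j\}$. Assigning rate $1$ to each of these four sets then proves all three inequalities at once.

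Let $a\in\mathbb{F}_q^*$ be the common nonzero value of $\mathbf{c}$, and set $S=\mathcal{S}upp(\mathbf{c})$. As in the proof of Theorem \ref{thm3.1}, $\mathbf{1}\in\overline{\mathcal{C}}^\perp$, because the columns of $\overline{\mathcal{G}}$ sum to zero. Since $(\mathbf{c},0)\in\overline{\mathcal{C}}^\perp$, we also have
\[
\mathbf{c}'=(\mathbf{c},0)-a\mathbf{1}\in\overline{\mathcal{C}}^\perp .
\]
Here $\mathcal{S}upp(\mathbf{c}')=([n]\setminus S)\cup\{n+1\}$. Thus $(\mathbf{c},0)$ and $\mathbf{c}'$ are dual codewords whose supports partition $[n+1]$.

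Now apply Proposition \ref{Prop2.1} to $\overline{\mathcal{C}}$, which is valid since $\overline{\mathcal{G}}$ is systematic. Because $i\in S$, the set $R_i=S\setminus\{i\}$ is a recovery set for the $i$-th symbol. Because $j\in\mathcal{S}upp(\mathbf{c}')$, the set $R_j=(([n]\setminus S)\cup\{n+1\})\setminus\{j\}$ is a recovery set for the $j$-th symbol. If $S=\{i\}$, then $R_i$ is empty, which would force the $i$-th column to be zero. In that case I would instead directly check that $\mathcal{G}^{(i)}=0$ contradicts systematicity, so $|S|\ge 2$ and $R_i\neq\emptyset$. Also, $|\mathcal{S}upp(\mathbf{c}')|\ge2$ since it contains $j$ and $n+1$, so $R_j\neq\emptyset$.

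The four sets $\{i\}$, $\{j\}$, $R_i$, $R_j$ are pairwise disjoint subsets of $[n+1]$. This holds because $S$ and its complement in $[n+1]$ are disjoint, $i\notin R_i\cup R_j$, and $j\notin R_i\cup R_j$. Assign $\lambda_{i,\{i\}}=\lambda_{i,R_i}=1$ and $\lambda_{j,\{j\}}=\lambda_{j,R_j}=1$, and all other loads $0$. Then every server carries load at most $1$, so the vector with $\overline\lambda_i=\overline\lambda_j=2$ and zeros elsewhere lies in $\Lambda(\overline{\mathcal{G}})$. This gives $\overline{\lambda}_i^*\ge2$, $\overline{\lambda}_j^*\ge2$ and $(\overline\lambda_i+\overline\lambda_j)^*\ge4$.

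If the recovery sets are required to be minimal, replace $R_i$ and $R_j$ by minimal recovery sets contained in them; disjointness is preserved. The only delicate point is the use of the hypothesis that all nonzero coordinates of $\mathbf{c}$ are equal. It is exactly what makes $\mathbf{c}'$ vanish on $S$, and hence what makes its support complementary to $S$.
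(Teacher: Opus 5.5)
Your proposal is correct and takes the same route as the paper. Both combine $(\mathbf{c},0)$ with the all-one vector $\mathbf{1}\in\overline{\mathcal{C}}^\perp$ to obtain a dual codeword supported on $[n+1]\setminus\mathcal{S}upp(\mathbf{c})$, then pair the resulting disjoint recovery sets $R_i$ and $R_j$ with the systematic singletons. You are more explicit than the paper in two places: you rule out $|\mathcal{S}upp(\mathbf{c})|=1$, and you check that all four sets $\{i\},\{j\},R_i,R_j$ are pairwise disjoint, which is what the bound $(\overline{\lambda}_i+\overline{\lambda}_j)^*\ge 4$ actually requires.
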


\begin{proof}
Since $\mathbf{c}\in \mathcal{C}^{\perp}$ and $i\in \mathcal{S}upp(\mathbf{c}),$ the set $R_i=\mathcal{S}upp(\mathbf{c})\setminus\{i\}$ is a recovery set for the $i$-th data symbol. Note that  $j\notin R_i.$ Next, we observe that all nonzero coordinates of $\mathbf{c}$  are equal and the all-one vector belongs to $\overline{\mathcal{C}}^\perp.$ This implies that  there exists a codeword $\mathbf{c}^{\prime}\in \overline{\mathcal{C}}^\perp$ whose support is
$ [n+1]\setminus\mathcal{S}upp(\mathbf{c})$
and  $j\in\mathcal{S}upp(\mathbf{c^\prime}).$ Therefore,
$ R_j=\mathcal{S}upp(\mathbf{c}^\prime)\setminus\{j\}$
is a recovery set for the $j$-th data symbol. Note that $R_i\cap R_j=\emptyset.$ Thus, each of the the $i$-th and $j$-th data symbols possesses at least two disjoint recovery sets. From this, the desired result follows. 
\end{proof}
\begin{remark}\label{rem3.2} Theorem \ref{thm2} is of particular interest for codes whose service rates satisfy $\lambda_i +\lambda_j \leq 3.$ For such codes, the extension may yield a strict improvement in the service rate. In particular, the binary extended Hamming code offers higher service rates than $\mathrm{Ham}(r,2),$ making it a more attractive choice for incomplete decoding.
\end{remark}

%~~~~~~~~~~~
%~~~~~~~~~~~
%~~~~~~~~~~~
The following theorem associates the design structures formed by the supports of codewords of weight $\omega$ and the supports of codewords of weight $n+1-\omega$ of $\mathcal{C}^{\perp}$ to the design structure induced by the corresponding weight codewords in $\overline{\mathcal{C}}^\perp.$
\begin{theorem}\label{thm3.3}
Let $\omega$ and $\omega_0$ be integers such that $3 \leq \omega \leq n-2$ and $\omega_0=n+1-\omega.$ Let $\mathcal{C}$ be a binary linear code of length $n$ with $A_\omega^\perp \neq 0$ and $A_{\omega_0}^\perp \neq 0.$    %(throughout this paper, $A_\omega^{\perp}$ denotes the number of codewords of weight $\omega$ in the dual code $\mathcal{C}^{\perp}$)
Suppose that the supports of codewords of   weight $\omega$  of $\mathcal{C}^{\perp}$ form a $t$-$(n,\omega,\mu_1)$ design and the supports of codewords of  weight $\omega_0$  of $\mathcal{C}^{\perp}$ form a $t$-$(n,\omega_0,\mu_2)$ design with $\mu_1(\omega_0-1)\binom{\omega_0-2}{t-2}=\mu_2(\omega-1)\binom{\omega-2}{t-2}$ for some $\mu_1, \mu_2 >0$ and $t\geq 2 .$   Then the supports of codewords of  weight $\omega$  of  $\overline{\mathcal{C}}^{\perp}$ form an $(n+1,\omega,\mu)$-BIBD  with $\mu=\mu_2\left(\frac{n-1}{\omega_0-1}\right) \left( \frac{n-\omega_0}{w_0}\right)\frac{\binom{n-2}{t-2}}{\binom{\omega_0-2}{t-2}}.$

\end{theorem}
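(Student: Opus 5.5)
The plan is to describe the weight-$\omega$ codewords of $\overline{\mathcal{C}}^\perp$ explicitly and then count the blocks through each pair of points in $[n+1]$. Since $\mathcal{C}$ is binary, $\overline{\mathcal{C}}^\perp$ is spanned by the vectors $(\mathbf{c},0)$ with $\mathbf{c}\in\mathcal{C}^\perp$ together with the all-one vector $\mathbf{1}$ of length $n+1$. Hence every codeword of $\overline{\mathcal{C}}^\perp$ has the form $(\mathbf{c},0)$ or $(\mathbf{c},0)+\mathbf{1}$.

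A codeword of the first form has weight $\omega$ exactly when $w_H(\mathbf{c})=\omega$, and its support is a block $B$ of the first design, so it avoids $n+1$. A codeword of the second form has support $([n]\setminus \mathcal{S}upp(\mathbf{c}))\cup\{n+1\}$. It has weight $\omega$ exactly when $w_H(\mathbf{c})=n+1-\omega=\omega_0$. So the supports of weight-$\omega$ codewords of $\overline{\mathcal{C}}^\perp$ are:
\begin{itemize}
\item the blocks $B$ of the $t$-$(n,\omega,\mu_1)$ design, and
\item the sets $([n]\setminus B')\cup\{n+1\}$, where $B'$ runs over the blocks of the $t$-$(n,\omega_0,\mu_2)$ design.
\end{itemize}
In the binary case a support determines its codeword, so these blocks are distinct.

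By Lemma \ref{lem2.1} (with $s=2$ and $s=1$) and Lemma \ref{lem2.2}, the first design is a $2$-design with index $\lambda^{(1)}=\mu_1\binom{n-2}{t-2}/\binom{\omega-2}{t-2}$. The second is a $2$-design with index $\lambda^{(2)}=\mu_2\binom{n-2}{t-2}/\binom{\omega_0-2}{t-2}$, replication number $r_2=\lambda^{(2)}\frac{n-1}{\omega_0-1}$, and $b_2=\frac{nr_2}{\omega_0}$ blocks. There are two kinds of pairs to handle.

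\emph{Pairs $\{x,y\}\subseteq[n]$.} Such a pair lies in $\lambda^{(1)}$ blocks of the first kind. It lies in a block of the second kind exactly when $B'$ avoids both $x$ and $y$. By inclusion–exclusion, or by Theorem \ref{thm2.1} (applicable because $\omega\le n-2$ gives $\omega_0\ge 3$, and $\omega\ge 3$ gives $\omega_0\le n-2$), this number is $b_2-2r_2+\lambda^{(2)}$. The total is therefore $\lambda^{(1)}+b_2-2r_2+\lambda^{(2)}$.

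\emph{Pairs $\{x,n+1\}$.} No block of the first kind contains $n+1$. The pair lies in $([n]\setminus B')\cup\{n+1\}$ exactly when $x\notin B'$, which happens for $b_2-r_2$ blocks.

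The key step, and the only real obstacle, is to show that these two counts coincide. Equality holds if and only if $\lambda^{(1)}=r_2-\lambda^{(2)}=\lambda^{(2)}\frac{n-\omega_0}{\omega_0-1}$. Substituting the expressions above and using $n-\omega_0=\omega-1$, this becomes
$$\mu_1(\omega_0-1)\binom{\omega_0-2}{t-2}=\mu_2(\omega-1)\binom{\omega-2}{t-2},$$
which is precisely the hypothesis. So every pair of points of $[n+1]$ lies in the same number $\mu$ of blocks, and the supports form an $(n+1,\omega,\mu)$-BIBD.

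Finally, I would read off
$$\mu=b_2-r_2=r_2\,\frac{n-\omega_0}{\omega_0}=\mu_2\left(\frac{n-1}{\omega_0-1}\right)\left(\frac{n-\omega_0}{\omega_0}\right)\frac{\binom{n-2}{t-2}}{\binom{\omega_0-2}{t-2}},$$
which matches the value of $\mu$ in the statement. It is positive because $A^\perp_{\omega_0}\neq 0$ and $\mu_2>0$.
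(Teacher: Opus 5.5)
Your proposal is correct and follows essentially the same route as the paper. You use the same two families $(\mathbf{c},0)$ and $(\mathbf{c}',0)+\mathbf{1}$, the same counts ($\lambda^{(1)}+b_2-2r_2+\lambda^{(2)}$ for pairs in $[n]$ via the complementary design, and $b_2-r_2$ for pairs $\{x,n+1\}$), and the hypothesis enters exactly as the identity $\lambda^{(1)}+\lambda^{(2)}=r_2$ (the paper's $\widehat{\mu}_1+\widehat{\mu}_2=r_2$). You also supply a step the paper leaves implicit: $\overline{\mathcal{C}}^\perp=\{(\mathbf{c},0)\mid \mathbf{c}\in\mathcal{C}^\perp\}\oplus\langle\mathbf{1}\rangle$, so these two families exhaust all weight-$\omega$ codewords of $\overline{\mathcal{C}}^\perp$.
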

\begin{proof}
    Since the supports of codewords of weight $\omega$  of $\mathcal{C}^\perp$  form a $t-(n,\omega,\mu_1)$ design and the supports of codewords of  weight $\omega_0$  of $\mathcal{C}^\perp$  form a $t-(n,\omega_0,\mu_2)$ design, it follows from Lemma \ref{lem2.1} that they also form an $(n,\omega,\widehat{\mu}_1)$-BIBD  and an $(n,\omega_0,\widehat{\mu}_2)$-BIBD, respectively, where  $\widehat{\mu}_1=\mu_1\cdot \frac{\binom{n-2}{t-2}}{\binom{\omega-2}{t-2}}$ and $\widehat{\mu}_2=\mu_2\cdot \frac{\binom{n-2}{t-2}}{\binom{\omega_0-2}{t-2}}$. 

    Let $\mathcal{D}_1=([n],\mathcal{X}_1)$ be the $(n,\omega,\widehat{\mu}_1)$-BIBD formed by the supports of all codewords of weight $\omega$ in $\mathcal{C}^\perp.$  By Lemma \ref{lem2.2}, we see that each point in $[n]$ occurs in exactly $r_1=\frac{\widehat{\mu}_1(n-1)}{\omega-1}$ blocks and $\mathcal{D}_1$ has exactly $b_1=\frac{nr_1}{\omega}$ blocks. Next, let $\mathcal{D}_2=([n],\mathcal{X}_2)$ be the $(n,\omega_0,\widehat{\mu}_2)$-BIBD formed by the supports of all codewords of weight $\omega_0$ in $\mathcal{C}^\perp.$  Again, by Lemma \ref{lem2.2}, we see that each point in $[n]$ occurs in exactly $r_2=\frac{\widehat{\mu}_2(n-1)}{\omega_0-1}$ blocks and $\mathcal{D}_2$ has exactly $b_2=\frac{nr_2}{\omega_0}$ blocks. By Theorem \ref{thm2.1}, we see that $\mathcal{D}_2^c$ is an $(n,n-\omega_0,b_2-2r_2+\widehat{\mu}_2)$-BIBD with the replication number $r_2^c=b_2-r_2.$ 
    
    Now consider the set $\mathcal{M}=\mathcal{M}_1\cup \mathcal{M}_2,$
  where 
        $\mathcal{M}_1=\{(\mathbf{c},0) ~|~ \mathbf{c}\in \mathcal{C}^{\perp} \text{ and } w_H(\mathbf{c})=\omega\} $ and $\mathcal{M}_2= \{(\textbf{1}+\mathbf{c}^{\prime},1) ~|~ \mathbf{c}^\prime \in \mathcal{C}^{\perp} \text{ and } w_H(\mathbf{c}^\prime)=\omega_0\}.$ It is easy to see that $\mathcal{M}\subseteq \overline{\mathcal{C}}^{\perp}$ and $w_H(\mathbf{d})=\omega $ for all $\mathbf{d} \in \mathcal{M}.$ 
%It is easy to see that every vector in $\mathcal{M}$  belongs to $\overline{\mathcal{C}}^{\perp}$  and has Hamming weight $\omega.$ 
Next, we observe  that adding the $(n+1)$-th  node to all the blocks of $\mathcal{D}_2^c$ yields the supports of the  set $\mathcal{M}_2.$ 
Now consider the set \begin{equation*}
    \widetilde{\mathcal{D}}=\{\mathcal{S}upp(\mathbf{d}) ~|~ \mathbf{d}\in \mathcal{M}\}.
\end{equation*} 
We claim that $\mathcal{X}=([n+1],\widetilde{\mathcal{D}})$ is an $(n+1,\omega, \mu )$-BIBD with $\mu=\mu_2\left(\frac{n-1}{\omega_0-1}\right) \left( \frac{n-\omega_0}{w_0}\right)\frac{\binom{n-2}{t-2}}{\binom{\omega_0-2}{t-2}}$. To prove this, it suffices to show that every point $i\in [n+1]$
 occurs in exactly the same number of blocks of $\mathcal{X}$, and every pair $\{i,j\}$ of distinct points of $[n+1]$ is contained in exactly $\mu$ blocks of $\mathcal{X}.$ 
 
 To prove that every point $i\in [n+1]$
 occurs in the same number of blocks of $\mathcal{X},$  we first note that the $(n+1)$-th node is contained in exactly $b_2$ blocks. Further, we observe that each $i\in[n]$ appears in exactly $r_1+r_2^c$ blocks.  Now using the fact that $\mu_1(\omega_0-1)\binom{\omega_0-2}{t-2}=\mu_2(\omega-1)\binom{\omega-2}{t-2}, $ it is easy to see that $b_2= r_1+r_2^c.$ This implies that each point $i \in[n+1]$ appears in exactly $b_2$ blocks. 
 
 Next, to prove that every pair $\{i,j\}$ of distinct points of $[n+1]$ is contained in exactly $\mu$ 
 blocks of $\mathcal{X},$ we observe that 
the pair $\{i,n+1\}$ is contained in exactly $b_2-r_2$ blocks and the pair $\{i,j\}$ is contained in exactly $\widehat{\mu}_1+b_2-2r_2+\widehat{\mu}_2=b_2-r_2$ blocks for $i,j\in[n].$  This proves our claim. 
\end{proof}
% \begin{remark}
%   \textcolor{red}{ The working of Theorem \ref{thm3.3} requires the codewords of weight $\omega$ and $\omega_0$ in  $\mathcal{C}^\perp$ to form $t$-designs. For the same, sufficient conditions are given in Theorem \ref{thm2.2}, and one of the assumptions leads to  $t+1\leq \omega\leq n-t.$}
% \end{remark}
\begin{corollary}\label{coro3.2}
Let $\omega=\frac{n+1}{2}$, and let $\mathcal{C}$ be a  binary linear code of length $n$ with $A_\omega^\perp \neq 0.$ Suppose that the supports of all codewords of weight $\omega$ in $\mathcal{C}^{\perp}$  form a $t$-$(n,\omega,\mu_1) $  design  for some $\mu_1 >0$ and $t\geq 2.$  Then the supports of the weight $\omega$ codewords of $\overline{\mathcal{C}}^{\perp}$ forms an $(n+1,\omega,\mu)$-BIBD  with $\mu= 2\mu_1\left(\frac{n-1}{n+1}\right) \frac{\binom{n-2}{t-2}}{\binom{\omega-2}{t-2}}.$
\end{corollary}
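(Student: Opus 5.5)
The plan is to obtain the corollary as the special case $\omega_0=\omega$ of Theorem \ref{thm3.3}. Since $\omega=\frac{n+1}{2}$, we have $\omega_0=n+1-\omega=\frac{n+1}{2}=\omega$, so $n$ is odd. The hypothesis $A_{\omega_0}^\perp=A_\omega^\perp\neq 0$ is then the one we are given. Likewise, the supports of the weight-$\omega_0$ codewords of $\mathcal{C}^\perp$ are exactly the supports of the weight-$\omega$ codewords, so they form the same $t$-$(n,\omega,\mu_1)$ design. We may therefore take $\mu_2=\mu_1$ in Theorem \ref{thm3.3}.

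Next I verify the remaining hypotheses of Theorem \ref{thm3.3}. The balancing condition $\mu_1(\omega_0-1)\binom{\omega_0-2}{t-2}=\mu_2(\omega-1)\binom{\omega-2}{t-2}$ holds trivially, because $\omega_0=\omega$ and $\mu_2=\mu_1$. The range condition $3\le\omega\le n-2$ with $\omega=\frac{n+1}{2}$ amounts to $n\ge 5$. I expect this to be the only real obstacle. I would handle it by noting that a $t$-design with $t\ge 2$ and block size $\omega=\frac{n+1}{2}<n$ forces nondegenerate parameters, or, if needed, by tacitly assuming $n\ge 5$ as in Theorem \ref{thm3.3}.

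Finally I substitute into the formula for $\mu$ in Theorem \ref{thm3.3}:
\[
\mu=\mu_1\left(\frac{n-1}{\omega-1}\right)\left(\frac{n-\omega}{\omega}\right)\frac{\binom{n-2}{t-2}}{\binom{\omega-2}{t-2}}.
\]
With $\omega-1=\frac{n-1}{2}$ and $n-\omega=\frac{n-1}{2}$, the first factor becomes $2$ and the second becomes $\frac{n-1}{n+1}$. This gives $\mu=2\mu_1\left(\frac{n-1}{n+1}\right)\frac{\binom{n-2}{t-2}}{\binom{\omega-2}{t-2}}$, which is the claimed value.

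As a sanity check, I would note how the block collection of Theorem \ref{thm3.3} looks here. Both $\mathcal{M}_1$ and $\mathcal{M}_2$ are built from the same set of weight-$\omega$ dual codewords. $\mathcal{M}_1$ consists of their supports with $0$ appended. $\mathcal{M}_2$ consists of their complements in $[n]$, which have size $n-\omega=\omega-1$, with the point $n+1$ added. This is consistent with the counting in that proof, so no separate argument is needed.
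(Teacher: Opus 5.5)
Your proposal is correct and matches the paper's intended argument. The paper gives no separate proof and treats the corollary as the specialization $\omega_0=\omega$, $\mu_2=\mu_1$ of Theorem~\ref{thm3.3}. In that case the balancing condition is trivial, and the factors $\frac{n-1}{\omega-1}=2$ and $\frac{n-\omega}{\omega}=\frac{n-1}{n+1}$ give exactly the stated $\mu$.

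One caveat concerns the range condition $3\le\omega\le n-2$. The design hypothesis does not force $n\ge 5$: for $n=3$ and $\mathcal{C}$ the repetition code, the weight-$2$ supports of $\mathcal{C}^\perp$ form a $2$-$(3,2,1)$ design. So your first suggested justification fails there, and Theorem~\ref{thm3.3} does not formally apply. This case is easy to check directly. Here $\overline{\mathcal{C}}^\perp$ contains all six weight-$2$ vectors of length $4$, which form a $(4,2,1)$-BIBD, in agreement with $\mu=1$ from the formula.
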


\begin{theorem}\label{thm3.4}
    Let $\mathcal{C}$  be an $[n,k]$   binary linear code with a systematic generator matrix, and suppose that $A_\omega^\perp \neq 0,$ where $\omega=\frac{n+1}{2}.$ If the supports of all codewords of weight $\omega$ in $\mathcal{C}^\perp$ form an $(n,\omega,\mu)$-BIBD for some $\mu>0.$ Then for $1\leq i \leq k,$ we have
    $\lambda_i^*\geq 1+ \frac{n-1}{\omega-1}=3$  and 
         $\overline{\lambda}_{i}^*\geq 3+\frac{2}{n-1} .$

\end{theorem}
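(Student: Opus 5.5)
The first bound follows directly from Theorem \ref{thm2.3}. The hypotheses give $A_\omega^\perp\neq 0$ with $d^\perp\le\omega\le n$, and the weight-$\omega$ supports form an $(n,\omega,\mu)$-BIBD, so $\lambda_i^*\ge 1+\frac{n-1}{\omega-1}$. Substituting $\omega=\frac{n+1}{2}$ gives $\omega-1=\frac{n-1}{2}$, so the bound equals $1+2=3$.

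For the extended code I will apply Corollary \ref{coro3.2}, that is, Theorem \ref{thm3.3} with $\omega_0=n+1-\omega=\omega$, in the BIBD case $t=2$. With $\mu_1=\mu_2=\mu$ the compatibility condition $\mu_1(\omega_0-1)=\mu_2(\omega-1)$ holds trivially. Hence the supports of the weight-$\omega$ codewords of $\overline{\mathcal{C}}^\perp$ form an $(n+1,\omega,\overline{\mu})$-BIBD with $\overline{\mu}=2\mu\frac{n-1}{n+1}>0$. Concretely, the blocks are the sets $(\mathcal{S}upp(\mathbf{c}),0)$ together with the sets $(\mathbf{1}+\mathbf{c},1)$ for $\mathbf{c}\in\mathcal{C}^\perp$ of weight $\omega$. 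Theorem \ref{thm3.3} requires $3\le\omega\le n-2$, which holds for $n\ge5$; the degenerate small cases ($n=3$) I would check by hand or exclude.

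Next I need a systematic generator matrix for $\overline{\mathcal{C}}$. The extended matrix $\overline{\mathcal{G}}$ in \eqref{extendmatrix} keeps the first $k$ columns as the identity, so $\overline{\mathcal{G}}$ is systematic. Proposition \ref{Prop2.1} therefore applies to $\overline{\mathcal{C}}$ with length $n+1$, and so does Theorem \ref{thm2.3}, since $\overline{d}^\perp\le\omega$ automatically. This gives
\[\overline{\lambda}_i^*\ \ge\ 1+\frac{(n+1)-1}{\omega-1}=1+\frac{2n}{n-1}=3+\frac{2}{n-1}.\]

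The main obstacle is rigor in applying Theorem \ref{thm3.3}: it asserts that the displayed family $\mathcal{M}$ gives a BIBD, but it does not show that $\mathcal{M}$ exhausts the weight-$\omega$ codewords of $\overline{\mathcal{C}}^\perp$. I would avoid that issue entirely. The proof of Theorem \ref{thm2.3} only uses some family of dual codewords of weight $\omega$ whose supports form a BIBD. Every block of $\mathcal{M}$ containing $i$ yields a recovery set for the $i$-th data symbol. Assigning load $1/\overline{\mu}$ to each of these $\overline{r}=\overline{\mu}\frac{n}{\omega-1}$ blocks then serves $\overline{r}/\overline{\mu}$ requests without exceeding unit capacity at any node, and this suffices. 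I would also verify the replication count directly: $b_2=r_1+r_2^c$ reduces to $b=r+(b-r)$, which holds trivially when the two designs coincide.
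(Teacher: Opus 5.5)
Your proposal is correct and follows the route the paper intends (the paper leaves the proof of Theorem~\ref{thm3.4} implicit): Theorem~\ref{thm2.3} gives $\lambda_i^*\ge 3$, and Corollary~\ref{coro3.2} with $t=2$, combined with Theorem~\ref{thm2.3} applied to the still-systematic $\overline{\mathcal{G}}$ of length $n+1$, gives $\overline{\lambda}_i^*\ge 1+\frac{n}{\omega-1}=3+\frac{2}{n-1}$. Your worry that $\mathcal{M}$ might not exhaust the weight-$\omega$ codewords is harmless, since your workaround is valid, and it is also unfounded in the binary case. Indeed $\overline{\mathcal{C}}^\perp=\{(\mathbf{c},0),\,(\mathbf{c}+\mathbf{1},1) : \mathbf{c}\in\mathcal{C}^\perp\}$, so every weight-$\omega$ codeword of $\overline{\mathcal{C}}^\perp$ already lies in $\mathcal{M}$.
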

%\begin{proof}
 %   \end{proof}

\begin{remark}\label{rem5.4}
   Let $C$ be an $[n,k]$ binary linear code with a systematic generator matrix. By  Lemma \ref{LyandSoljanin}, we see that $\lambda_i^* \leq 1+\frac{n-1}{d^\perp-1} $ and $\overline{\lambda}_i^*\leq 1+ \frac{n}{\overline{d}^\perp-1}$ for each $i \in[k].$   
   %where $d^\perp$ and $\overline{d}^\perp$ denote the minimum distance of the dual code $C^\perp$ and $\overline{C}^\perp,$ respectively.
   Now, if $\omega=d^{\perp}$ in  Theorem \ref{thm3.4}, then  we get   $\lambda_i^*=3$  and 
    $\overline{\lambda}_{i}^*=3+\frac{2}{n-1}$ for each $i \in[k].$  
\end{remark}

\begin{example}
    \noindent \textbf{Application to Extended Ham($r, 2$)}:  Let $\mathcal{C}$ be the binary Hamming code with parameters $[2^r-1,2^r-1-r].$ Recall that $\mathcal{C}^\perp$ is the Simplex code with parameters $[2^r-1,r]$ and minimum Hamming distance $2^{r-1}.$  From  Section $4.2$ of \cite{Ding}, we see that the supports of the minimum weight codewords of $\mathcal{C}^{\perp}$ form a $(2^r-1,2^{r-1},2^{r-2})$-BIBD. Then by Remark \ref{rem5.4}, we have $\lambda_i^*=3$ and $\overline{\lambda}_i^*=3+\frac{1}{2^{r-1}-1}$ for each $i \in[2^r-1-r].$ Note that $\lambda_i^*<\overline{\lambda}_i^*  $  for each $i \in[2^r-1-r].$

\end{example}

\section{Cyclic codes and their connection with  SBIBDs}\label{cyclicsection}

In this section, we first recall some basic properties of cyclic codes and their duals. We then establish a connection between cyclic codes and SBIBDs, which provides a combinatorial framework for studying the maximal achievable service rates associated with these codes.

% \begin{definition}\cite[Def. 7.1.1]{Ling} \textbf{Cyclic code:}
  A linear code $\mathcal{C}$ of length $n$ over $\mathbb{F}_q$ is called a cyclic code if it is invariant under cyclic shifts, that is, $(c_{n},c_1,c_2,\ldots,\\c_{n-1})\in \mathcal{C}$    whenever
$(c_1,c_2,\ldots,c_{n})\in \mathcal{C}.$

We  identify a vector 
 $(c_1,c_2,\ldots,c_n)$
  with the polynomial
 $c_1+c_2x+c_3x^2+\cdots+c_{n}x^{n-1}
\in \mathbb{F}_q[x]/\langle x^n-1\rangle.$
%In view of this, by Theorem 7.2.1 of \cite{Ling}, we see that 
Then a cyclic code of length $n$ over $\mathbb{F}_q$ can be viewed as an ideal of the quotient ring
$\mathbb{F}_q[x]/\langle x^n-1\rangle.$ 
Note that $\mathbb{F}_q[x]/\langle x^n-1\rangle$ is a principal ideal ring. Further, by Theorem 4.2.1 of \cite{Huffman}, every cyclic code $\mathcal{C}$ is generated by a unique monic divisor $g(x)$  of $x^n-1,$ \textit{i.e.,} $\mathcal{C}=\langle g(x)\rangle.$ The polynomial $g(x)$ is called the generator polynomial of $\mathcal{C}$.
%andwe have $\mathcal{C}=\langle g(x)\rangle.$
If $\deg(g(x))=n-k,$ then $\mathcal{C}$ is an $[n,k]$ cyclic code.
The polynomial 
$h(x)=\frac{x^n-1}{g(x)}(=h_1+h_2x+\cdots+h_{k+1}x^{k})$
is called the parity-check polynomial of $\mathcal{C}.$  A polynomial $c(x)\in\mathbb{F}_q[x]/\langle x^n-1\rangle$ is a codeword in $\mathcal{C}$  if and only if $h(x)c(x)\equiv 0 \pmod{x^n-1}.$ Now, let us define  $h^*(x)=x^{k}h(x^{-1}),$ (called the reciprocal polynomial of $h(x)$).  By Theorem 4.2.7 of \cite{Huffman},  we see that $\mathcal{C}^\perp$ is also cyclic, and its generator polynomial is 
$h^\perp(x)=h^*(x)h_1^{-1}.$ 
%That is, we have$\mathcal{C}^\perp=\langle h^\perp(x)\rangle.$
%Throughout this paper, for a subset $A=\{a_1,a_2,\ldots,a_s\}\subseteq \mathbb{Z}_n$and an integer $a,$  we define$A+a=\{(a_i+a)\bmod n : a_i\in A\}.$ 

% For convenience, we label the coordinate positions of a cyclic code by the elements of \textcolor{blue}{the ring of integers modulo $n$, $\mathbb{Z}_n=\{0, 1,\ldots,n-1\}$, identifying the $i$-th coordinate position in a codeword with the $(i-1)$-th element of $\mathbb{Z}_n.$}
% Throughout this paper, for a subset $J\subseteq \mathbb{Z}_n$ and an integer $a,$ the notation
% $J+a=\{(j+a)\bmod n : j\in J\}$
% denotes the translation (or cyclic shift) of $J$ by $a$ modulo $n.$ 
% Let $\mathcal{C}$ be an $[n,k]$ cyclic code over $\mathbb{F}_q$ with a systematic generator matrix $\mathcal{G}.$ Now if $\mathbf{c}=(c_1,c_2,\ldots,c_{n})$ is a  codeword in the dual code $\mathcal{C}^{\perp}$ with $X=\mathcal{S}upp(\mathbf{c}),$  then $\mathbf{c}^{\prime}=(c_{n},c_1,\ldots,c_{n-1})\in \mathcal{C}^{\perp}$ and support of the codeword $\mathbf{c}^{\prime}$ is $X+1.$ This implies that there exists codewords with supports $X+1,X+2,\ldots,X+n-1$ in the dual code $\mathcal{C}^{\perp}.$ \textcolor{blue}{ Note that the supports of codewords and their cyclic shifts are subsets of $\mathbb{Z}_n$.}
For convenience, we label the coordinate positions of a cyclic code  integer modulo $n,$ that is, we will  identify the $i$-th coordinate position in a codeword with the $(i-1)$-th element of $\mathbb{Z}_n.$
Throughout this paper, for a subset $J\subseteq \mathbb{Z}_n$ and an integer $a,$ the notation
$J+a=\{(j+a)\bmod n : j\in J\}$
denotes the translation (or cyclic shift) of $J$ by $a$ modulo $n.$ 
Let $\mathcal{C}$ be an $[n,k]$ cyclic code over $\mathbb{F}_q$ with a systematic generator matrix $\mathcal{G}.$ Now if $\mathbf{c}=(c_1,c_2,\ldots,c_{n})$ is a  codeword in $\mathcal{C}^{\perp}$ with $X=\mathcal{S}upp(\mathbf{c}),$  then $\mathbf{c}^{\prime}=(c_{n},c_1,\ldots,c_{n-1})\in \mathcal{C}^{\perp}$ and support of the codeword $\mathbf{c}^{\prime}$ is $X+1.$ This implies that there exists codewords with supports $X+1,X+2,\ldots,X+n-1$ in the dual code $\mathcal{C}^{\perp}.$  Note that the supports of codewords and their cyclic shifts are subsets of $\mathbb{Z}_n$.

Let $\mathcal{C}$ be a cyclic code. The following theorem shows how the existence of an SBIBD corresponding to a codeword in $\mathcal{C}^\perp$  can be used to derive a lower bound on the maximum achievable service rate of $\mathcal{C}.$
\begin{theorem}\label{thm4.1}
  Let $\mathcal{C}$ be an $[n,k]$ cyclic code over $\mathbb{F}_q$ with a systematic generator matrix $\mathcal{G}.$ Let  $\mathbf{c}=(c_1,c_2,\ldots,c_{n})$ be a  codeword in the dual code $\mathcal{C}^{\perp}$ with $X=\mathcal{S}upp(\mathbf{c})\subset\mathbb{Z}_n$ and $|X|=\omega.$ Suppose that $\mathcal{D}=(V,\mathcal{X})$ is an $(n,\omega,\mu)$-SBIBD, where $V=\mathbb{Z}_n,$ and  $\mathcal{X}=\{X+1,X+2,\ldots,X+n-1\}.$ Then the maximum  achievable service rate $\lambda_i^*$ for the $i$-th data symbol satisfies $\lambda_i^* \geq 1+\frac{n-1}{\omega-1}.$ In particular, if $\omega=d^\perp,$ then we have 
$\lambda_i^{*}= 1+\frac{n-1}{d^\perp-1}$ for each $i\in[k].$
\end{theorem}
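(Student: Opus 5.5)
The plan is to reduce Theorem \ref{thm4.1} to the argument of Theorem \ref{thm2.3}, using the cyclic structure to supply the blocks. One small point first. The statement lists the blocks as $X+1,\ldots,X+n-1$, but $X=X+0=X+n$, and an SBIBD on $n$ points has exactly $n$ blocks. So I will read $\mathcal{X}$ as the $n$ translates $X+a$, $a\in\mathbb{Z}_n$, all distinct by the symmetric hypothesis. (If only $n-1$ translates were intended, I would add $X$ itself; it is also the support of a dual codeword.)

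The key steps are as follows.

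\emph{Step 1: the blocks are supports of dual codewords.} For each $a\in\mathbb{Z}_n$, the $a$-fold cyclic shift of $\mathbf{c}$ lies in $\mathcal{C}^\perp$, as noted before the theorem. Its support is $X+a$, of size $\omega$. Hence every block of $\mathcal{D}$ is the support of a codeword of $\mathcal{C}^\perp$ of weight $\omega$.

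\emph{Step 2: count blocks through a point and through a pair.} By Lemma \ref{lem2.2}, each point $i$ lies in exactly $r=\frac{\mu(n-1)}{\omega-1}$ blocks; since $\mathcal{D}$ is symmetric, $r=\omega$. By the BIBD property, each $j\neq i$ lies in exactly $\mu$ of these $r$ blocks.

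\emph{Step 3: build recovery sets and a load assignment.} Fix $i\in[k]$, i.e.\ a systematic coordinate. By Proposition \ref{Prop2.1}, for each block $B\ni i$ the set $B\setminus\{i\}$ is a recovery set for the $i$-th data symbol. Serve one unit of demand through the systematic node $\{i\}$. Assign load $1/\mu$ to each of the $r$ sets $B\setminus\{i\}$. Then every node $j\neq i$ carries total load $\mu\cdot(1/\mu)=1$, and node $i$ carries load $1$ from the singleton only. Conditions \eqref{eq1} and \eqref{eq2} therefore hold with $\lambda_i=1+r/\mu=1+\frac{n-1}{\omega-1}$ and all other $\lambda_j=0$. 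This gives the lower bound.

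\emph{Step 4: equality.} When $\omega=d^\perp$, the upper bound $\lambda_i^*\le 1+\frac{n-1}{d^\perp-1}$ of Lemma \ref{LyandSoljanin} holds for any systematic generator matrix. Combined with Step 3 it gives equality.

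The main obstacle is Step 1 together with the indexing of the blocks. The translates must be supports of genuine dual codewords, and the multiset of blocks must be exactly the SBIBD, so that the pair count $\mu$ applies without repetitions or omissions. Two checks are needed. First, cyclic shifts preserve membership in $\mathcal{C}^\perp$ because $\mathcal{C}^\perp$ is cyclic. Second, distinctness of the $n$ translates follows from $b=n$ blocks in an SBIBD, with $X$ itself included. The rest is the same load-balancing computation as in Theorem \ref{thm2.3}.
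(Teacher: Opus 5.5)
Your proposal is correct and follows the paper's proof essentially verbatim. The paper also notes $b=n$ and $r=\frac{\mu(n-1)}{\omega-1}$, observes that each block through $i$ is the support of a shifted dual codeword and so gives a recovery set, and then reuses the load-$1/\mu$ assignment of Theorem~\ref{thm2.3}, with Lemma~\ref{LyandSoljanin} supplying equality when $\omega=d^\perp$. Your reading of $\mathcal{X}$ as all $n$ translates $X+a$, $a\in\mathbb{Z}_n$ (including $X$ itself), is the one the paper implicitly uses when it asserts $b=n$.
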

\begin{proof}
   Since $\mathcal{D}=(V,\mathcal{X})$  is an $(n,\omega,\mu )$  SBIBD, it follows that the number of blocks $b=n$  and the replication number is $r=\frac{\mu(n-1)}{\omega-1}.$  Thus, each point of $V$ occurs in exactly $r$ blocks of $\mathcal{X}$. Each of these $r$ blocks corresponds to the support of a dual codeword, i.e., to a recovery set. Now using the same approach as in the proof of Theorem \ref{thm2.3}, the desired result follows immediately.  
\end{proof}

\begin{definition}\cite[Def. 3.1]{Stinson}
 A subset
$D=\{d_1,d_2,\ldots,d_\beta\}\subseteq \mathbb{Z}_n$
is called an $(n,\beta,\mu)$-difference set in $\mathbb{Z}_n$ if every nonzero element of $\mathbb{Z}_n$ can be expressed as a difference
$d_i-d_j \pmod n,$ 
in exactly $\mu$ ordered ways, where  $d_i,d_j\in D$ with  $d_i\neq d_j.$
\end{definition}

 By Theorem 3.8 of \cite{Stinson}, we note that if the set $D$ is an $(n,\beta,\mu)$-difference set in $\mathbb{Z}_n$, then $\mathcal{D}=(V,\mathcal{X})$ is an $(n,\beta,\mu)$ SBIBD, where $V=\mathbb{Z}_n$ and $\mathcal{X}=\{D,D+1,D+2,\ldots,D+n-1\}.$

\begin{corollary}\label{coro4.1}
      Let $\mathcal{C}$ be an $[n,k]$ cyclic code over $\mathbb{F}_q$ with a systematic generator matrix $\mathcal{G}.$ Let  $\mathbf{c}=(c_1,c_2,\ldots,c_{n})$ be a  codeword in the dual code $\mathcal{C}^{\perp}$ with $X=\mathcal{S}upp(\mathbf{c})$ and $|X|=\omega.$ Suppose that $X$ is an $(n,\omega,\mu)$-difference set in $\mathbb{Z}_n.$  Then the maximum  achievable service rate for the $i$-th data symbol satisfies $\lambda_i^* \geq 1+\frac{n-1}{\omega-1}$ for each $i\in[k].$ In particular, if $\omega=d^\perp,$ then  we have 
$\lambda_i^{*}= 1+\frac{n-1}{d^\perp-1}$ for each $i\in[k].$
\end{corollary}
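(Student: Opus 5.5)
The plan is to reduce Corollary \ref{coro4.1} directly to Theorem \ref{thm4.1} via the difference-set/SBIBD correspondence (Theorem 3.8 of \cite{Stinson}).

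Since $X=\mathcal{S}upp(\mathbf{c})$ is an $(n,\omega,\mu)$-difference set in $\mathbb{Z}_n$, that theorem shows that $(\mathbb{Z}_n,\{X,X+1,\ldots,X+n-1\})$ is an $(n,\omega,\mu)$-SBIBD. Because $\mathcal{C}^\perp$ is cyclic, each translate $X+a$ is the support of the shifted codeword obtained from $\mathbf{c}$ by $a$ cyclic shifts. So every block of this SBIBD is the support of a codeword of weight $\omega$ in $\mathcal{C}^\perp$, which is exactly the hypothesis used in Theorem \ref{thm4.1}.

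One bookkeeping point: Theorem \ref{thm4.1} lists the blocks as $X+1,\ldots,X+n-1$, omitting $X+0$. I will read this as the full set of $n$ translates, since $X+n=X$ and an SBIBD must have $n$ blocks. I will also check that the translates are pairwise distinct. If $X+a=X$ for some $a\neq 0$, then $\mu$ would have to equal $\omega$ for the difference $a$, which is impossible when $\omega<n$ because of $\mu(n-1)=\omega(\omega-1)$. Hence the blocks are genuinely $n$ distinct supports.

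Fix $i\in[k]$. The replication number is $r=\mu(n-1)/(\omega-1)$. By Proposition \ref{Prop2.1}, each block containing $i$ minus $\{i\}$ is a recovery set for the $i$-th symbol. Any other node $j$ lies in exactly $\mu$ of these $r$ blocks. Assigning load $1/\mu$ to each of these recovery sets keeps every server's load at most $1$ and serves $r/\mu=(n-1)/(\omega-1)$ requests. The systematic node $\{i\}$ serves one more request, and it is not in any of these recovery sets. This gives $\lambda_i^*\ge 1+\frac{n-1}{\omega-1}$.

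When $\omega=d^\perp$, Lemma \ref{LyandSoljanin} supplies the matching upper bound, so equality holds. I see no real obstacle here; the only step needing care is identifying the translate family with dual-codeword supports and confirming that there are $n$ distinct blocks.
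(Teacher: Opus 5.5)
Your proposal is correct and follows essentially the paper's route: the paper's proof just combines Theorem 3.8 of \cite{Stinson} (the development $\{X+a : a\in\mathbb{Z}_n\}$ of a difference set is an SBIBD) with Theorem \ref{thm4.1}, whose proof in turn reuses the load-$1/\mu$ argument of Theorem \ref{thm2.3} that you spelled out. Your extra remarks are sound clarifications of the small indexing slip in Theorem \ref{thm4.1}. Taking all $n$ translates $X+a$ is right. The distinctness check is harmless but not needed, since the counting argument works equally well with blocks counted with multiplicity.
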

\begin{proof}
   The desired result follows from Theorem 3.8 of \cite{Stinson} and Theorem \ref{thm4.1}. 
\end{proof}

\begin{remark}
 Theorem \ref{thm4.1} and Corollary \ref{coro4.1} remain valid for constacyclic codes also. Indeed, the proofs of Theorem \ref{thm4.1} and Corollary \ref{coro4.1}  rely only on the existence of a systematic generator matrix and the SBIBD structure induced by the support of a suitable codeword in the dual code. Therefore, Theorem \ref{thm4.1} and Corollary \ref{coro4.1}   extend naturally to the constacyclic setting.  
\end{remark}
\begin{remark}\label{rem4.1}
It is easier to check whether the support of the coefficient vector of the parity-check polynomial $h(x)$ forms a difference set. If it forms a difference set, then the result of Corollary \ref{coro4.1} follows immediately.
    \end{remark}
\begin{example}
    Let $C$ be the cyclic $[11,6]$ ternary Golay code with a systematic generator matrix 
    
    \begin{equation*}G=
        \begin{bmatrix}
        1&0&0&0&0&0&0& 1 &1& 1& 1\\
        0& 1 &0&0&0&0&1&0&1&2&2\\
        0&0&1&0&0&0&1&1&0&1&2\\
        0&0&0&1&0&0&1&2&1&0&1\\
        0&0&0&0&1&0&1&2&2&1&0\\0&0&0&0&0&1&1&1&2&2&1
    \end{bmatrix}.
    \end{equation*} Note that $C^{\perp}$ is an $[11,5]$-code over $\mathbb{F}_3$ with the minimum Hamming distance $6.$ It is easy to see that $\mathbf{c}=(1,0,1,1,1,0,0,\\0,1,0,1)\in C^{\perp}$ and $w_H(\mathbf{c})=6.$   Further, we observe that $\mathcal{S}upp(\mathbf{c})=\{0,2,3,4,8,10\}$ is an $(11,6,3)$-difference set in $\mathbb{Z}_{11}.$  Now, by Corollary \ref{coro4.1},  we see that the maximum  achievable service rate for the $i$-th data symbol is given by $\lambda_i^*=3$ for $1\leq i \leq 6.$
\end{example}

%\{1,3,4,5,9,11\}
Given an $(n,\omega,\mu)$-difference set, one can construct cyclic codes of length $n$ associated with it.  Similar constructions have been considered in \cite{Graham,Weldon}. Now, we briefly outline the construction method given in \cite{Graham,Weldon}.

\vspace{2mm}
\noindent\textbf{Construction (A)}: Let $D$ be an $(n,\omega,\mu)$-difference set in $\mathbb{Z}_n$. Define the polynomial
$a(x)=a_0+a_1x+\cdots+a_{n-1}x^{n-1}\in \mathbb{F}_q[x],$
where $a_i\in \mathbb{F}_q^*$ if $i\in D$ and $a_i=0$ otherwise. Thus, $\mathcal{S}upp(a(x))=D.$ (Throughout the paper, $\mathcal{S}upp(b(x))$ denotes the support of the coefficient vector of the polynomial $b(x)\in \mathbb{F}_q[x]$.) Let 
$h(x)=\gcd(a(x),x^n-1),$
and define
$g(x)=\frac{x^n-1}{h(x)}
=g_0+g_1x+\cdots+g_kx^k.$
Let
$g^{*}(x)=x^k g(x^{-1})$ be the
the reciprocal polynomial of $g(x).$ Consider the cyclic code $C$ of length $n$ over $\mathbb{F}_q$ generated by
$g^{\perp}(x)=g_0^{-1}g^{*}(x).$ We call the cyclic code $C$ obtained through the above construction the cyclic code associated with the difference set $D.$
 Now, by Corollary \ref{coro4.1}, corresponding to a systematic generator matrix of $C,$  the maximum  achievable service rate for the $i$-th data symbol satisfies
$
\lambda_i^{*}\geq 1+\frac{n-1}{\omega-1}$ for $ i\in [k].$
 We present below several families of cyclic codes constructed from difference sets below. Using the SBIBDs associated with these difference sets, we obtain explicit lower bounds on the maximum achievable service rates for each data symbol.

\begin{itemize}
    \item[(I)] Let $p$ be a prime and $s$ be a positive integer, and let us take
$n=p^{2s}+p^{s}+1.$
By Theorem 3.26 of \cite{Stinson}, there exists an $(n,p^{s}+1,1)$-difference set $D$ in $\mathbb{Z}_n.$ Let
$D=\{d_1,d_2,\ldots,d_{p^{s}+1}\}.$
Now, corresponding to $D,$  consider the polynomial
$\theta(x)=\sum_{i=1}^{p^{s}+1}x^{d_i}.$
Let
$h(x)=\gcd(\theta(x),x^n-1),$
and define
$g(x)=\frac{x^n-1}{h(x)}.$
By Theorem 2 of  Graham and MacWilliams \cite{Graham}, the cyclic code generated by $h(x)$ over $\mathbb{F}_p$ has parameter
$\left[n,\left(\frac{p+1}{2}\right)^s+1\right]$ with minimum Hamming distance $p^s+1.$
Now let $C_1$ be the cyclic code over $\mathbb{F}_p$ generated by $g^{\perp}(x).$ Note that $\dim (C_1)=n-1-\binom{p+1}{2}^s. $   Weldon \cite{Weldon} showed that this class of cyclic codes is almost as powerful as BCH codes and considerably simpler to implement.

Now, by Corollary \ref{coro4.1}, corresponding to a systematic generator matrix of $C_1$, the maximum  achievable service rate for the $i$-th data symbol is given by    $\lambda_i^{*}= p^{s}+2$ for each $i\in\left[n-1-\binom{p+1}{2}^s\right].$

 \item[(II)] Let $p\equiv 3\pmod 4$ be a prime. Consider the set, $QR(p)=\{a^2: a \in \mathbb{Z}_p\setminus \{0\}\},$ of quadratic residues modulo $p$. 
 By Theorem 3.21 of \cite{Stinson}, we see that  the set $QR(p)$  form a $(p,\frac{p-1}{2}, \frac{p-3}{4})$-difference set in $\mathbb{Z}_p.$  Consider the cyclic code $C_2$ obtained by Construction (A) associated with the difference set $QR(p).$  Now, by Corollary \ref{coro4.1}, corresponding to a systematic generator matrix of $C_2$, the maximum achievable service rate for the $i$-th data symbol is given by  $\lambda_i^{*}\geq  1+\frac{2(p-1)}{p-3}$ for each $i.$

    \item[(III)] Let $p=4t^2+1$ be  a prime,  where $t$ is an odd integer. Consider the set,  $ \widetilde{QR}(p)=\{z^4:z \in \mathbb{Z}_p\setminus \{0\} \}$, of quartic residues modulo $p$. By Theorem 3.23 of \cite{Stinson}, we see that  the set $\widetilde{QR}(p)$ form a $(4t^2+1,t^2,\frac{t^2-1}{4})$-difference set in $\mathbb{Z}_p.$ Consider the cyclic code $C_3$ obtained by Construction (A) associated with the difference set $\widetilde{QR}(p).$  Again, by Corollary \ref{coro4.1}, corresponding to a systematic generator matrix  of $C_3$, the maximum  achievable service rate for the $i$-th data symbol is given by $\lambda_i^{*}\geq  1+\frac{4t^2}{t^2-1}$ for each $i.$ 
\end{itemize}

\section{Conclusion and future work}\label{conclusion}
This paper examined lower bounds on the maximum achievable service rates for data symbols in various families of linear codes associated with combinatorial structures such as $t$-designs, difference sets, and balanced incomplete block designs.  We established a lower bound on the maximum achievable service rate for each data symbol of a systematic linear code when the supports of codewords of a fixed weight in the dual code form a BIBD, as well as for the non-systematic binary case when the supports of codewords of a fixed weight in the dual code form a $t$-design. Furthermore, we analyzed how the parameters of the  design can be used to obtain lower bounds on the maximum achievable service rates for the data symbols of the corresponding linear codes, which are obtained through the incidence matrix of the design. We also explored the maximum achievable service rate of systematic extended linear codes and 
systematic cyclic codes. It would be interesting to derive explicit bounds on maximum achievable service rates of locally recoverable codes. Another worthwhile direction will be to investigate how the recovery-set structure of a data symbol in newly constructed codes differs from that in the original codes when the former are derived from the latter. Additionally, one may explore the properties of a generator matrix that yields a larger service region than the standard systematic generator matrix of a given code. Finding a generator matrix that gives the optimal SRR for a code still remains an open problem.

\section*{Acknowledgments}
The first author is grateful to the University Grants Commission (UGC), Government of India, for the financial support received under the UGC NET-SRF scheme. The second author gratefully acknowledges the financial support provided by the Anusandhan National Research Foundation (ANRF), Government of India, under the NPDF scheme through Grant No. PDF/2025/000550.

{}

\begin{thebibliography}{}
   
  \bibitem{MAktas}{Aktas, M.,  Anderson, S. E.,  Johnston, A.,  Joshi, G.,  Kadhe, S.,   Matthews, G. L., Mayer, C., and Soljanin, E.}: \newblock{On the service capacity region of accessing erasure coded content},\newblock{\it{ Proc. 55th Annu. Allerton Conf. Commun., Control, Comput.}}, pp. 17-24 (2017).

\bibitem{Aktas}{Aktas, M.S., Joshi, G.,  Kadhe, S.,   Kazemi, F. and  Soljanin, E.}: \newblock{Service rate region: A new aspect of coded distributed system design},\newblock{\it{ IEEE Trans. Inf. Theory}}, 67(12), pp. 7940-7963 (2021).

\bibitem{alfarano2024service}{Alfarano, G.N., Kılıç, A. B., Ravagnani, A., and Soljanin, E.}: \newblock{The service rate region polytope}, \newblock{\it{SIAM J. Appl. Algebra Geom.}}, 8(3), pp. 553-582 (2024).
  
 \bibitem{LRC}{Alfarano, G.N., Kılıç, A.B. and Ravagnani, A.}: \newblock{Service Aspects of LRC and Batch Codes}, \newblock{\it{IEEE BITS the Information Theory Magazine}}, 3(4), pp. 17-27 (2023). 
 \bibitem{Alfarano}{ Alfarano, G.N., Ravagnani, A. and Soljanin, E.}: \newblock{Dual-code bounds on multiple concurrent (local) data recovery},\newblock{\it{IEEE Int. Symp. Inf. Theory (ISIT)}}, pp. 2613-2618 (2022). 
  
 \bibitem{Ammar}{Ammar, B., Honary, B., Kou, Y., Xu, J. and Lin, S.}:  \newblock{Construction of low-density parity-check codes based on balanced incomplete block designs}, \newblock{\it{IEEE Trans. Inf. Theory}}, 50(6), pp. 1257-1269 (2004).
  
    \bibitem{Bose}{Bose, R. C.}:\newblock{ On the construction of balanced incomplete block designs},\newblock{\it{ Ann. Eugenics }},
9, pp. 353-399  (1939).
 
 \bibitem{HamPC}{Choudhary, P. and Bhaintwal, M.}: \newblock{The Service Rate Region of Hamming Codes},\newblock{\it{ IEEE Trans. Inf. Theory}},\newblock{DOI 10.1109/TIT.2026.3709105} (2025).
 
 
 
 \bibitem{oval}{Di Giusto, A., Ravagnani, A. and Soljanin, E.}: \newblock{ The Oval Strikes Back},\newblock{\it{  arXiv:2601.16628}} (2026).
 
\bibitem{Dimakis}{ Dimakis, A. G.,  Godfrey, P. B.,  Wu, Y.,  Wainwright, M. J. and  Ramchandran, K.}: \newblock{Network coding for distributed storage systems},\newblock{\it{ IEEE Trans. Inf. Theory}}, 56(9), pp. 4539-4551 (2010).
 
\bibitem{Ding}{Ding, C. and Tang, C.}: \newblock{The linear codes of t-designs held in the Reed-Muller and Simplex codes},\newblock{\it{Cryptogr. Commun.}}, 13(6), pp. 927-949 (2021).
  
\bibitem{GalH}{Gallager, R.}:  \newblock{Low-density parity-check codes}, \newblock{\it{IRE Trans. Inform. Theory}}, 8(1), pp. 21-28 (1962).
  
 \bibitem{Gardner}{  Gardner, K.,  Zbarsky, S., Doroudi, S.,  Harchol-Balter, M. and  Hyytia, E.}: \newblock{Reducing latency via redundant requests: Exact analysis},\newblock{\it{ ACM SIGMETRICS Perform. Eval. Rev.}}, 43(1), pp. 347-360 (2015).
  
  \bibitem{Gopalan}{Gopalan, P., Huang, C.,  Simitci, H. and  Yekhanin, S.}: \newblock{On the locality of codeword symbols},\newblock{\it{ IEEE Trans. Inf. Theory}}, 58(11), pp. 6925-6934 (2012).
  
\bibitem{Graham}{Graham, R.L. and MacWilliams, J.}:\newblock{ On the Number of Information Symbols in Difference‐Set Cyclic Codes},\newblock{ \it{ Bell System Technical Journal}}, 45(7), pp. 1057-1070 (1966).
 
 \bibitem{Huang}{  Huang, C.,  Chen, M. and  Li, J.}: \newblock{Pyramid codes: Flexible schemes to trade space for access efficiency in reliable data storage systems},\newblock{\it{ ACM Trans. Storage}}, 9(1), pp. 1-28 (2013).
  
 \bibitem{Huffman}{ Huffman, W. C. and Pless, V.}: \newblock{\it{Fundamentals of error-correcting codes}}, \newblock{Cambridge Univ. Press, Cambridge, New York, USA} (2003).
 
\bibitem{Johnson}{Johnson, S.J. and Weller, S.R.}: \newblock{Construction of low-density parity-check codes from Kirkman triple systems}, \newblock{\it{ IEEE Global Telecommunications Conference (Cat. No. 01CH37270)}}, 2, pp. 970-974 (2001). 
   
   \bibitem{Joshi}{  Joshi, G.,  Liu, Y. and  Soljanin, E.}: \newblock{Coding for fast content download},\newblock{\it{ in Proc. 50th Annu. Allerton Conf. Commun., Control, Comput.}},  pp. 326-333 (2012).
 
 \bibitem{Combit}{Kazemi, F., Karimi, E., Soljanin, E. and Sprintson, A.}:  \newblock{A combinatorial view of the service rates of codes problem, its equivalence to fractional matching and its connection with batch codes}, \newblock{\it{ IEEE Int. Symp. Inf. Theory (ISIT)}}, pp. 646-651 (2020).
 
  \bibitem{Geo}{Kazemi, F., Kurz, S. and Soljanin, E.}: \newblock{A geometric view of the service rates of codes problem and its application to the service rate of the first order Reed-Muller codes},\newblock{\it{IEEE Int. Symp. Inf. Theory (ISIT)}}, pp. 66-71 (2020).
 
\bibitem{Lan}{Lan, L., Tai, Y.Y., Lin, S., Memari, B. and Honary, B.}: \newblock{New constructions of quasi-cyclic LDPC codes based on special classes of BIBD's for the AWGN and binary erasure channels}, \newblock{\it{IEEE Trans. Commun.}}, 56(1), pp. 39-48 (2008).
 
\bibitem{Liang}{  Liang, G.  and Kozat, U. C.}: \newblock{Fast cloud: Pushing the envelope on delay performance of cloud storage with coding},\newblock{\it{ IEEE/ACM Trans. Netw.}}, 22(6), pp. 2012–2025 (2013).
  
 \bibitem{Ling}{ Ling, S. and Xing, C.}: \newblock{\it{Coding theory: A first course}}, \newblock{Cambridge Univ. Press, Cambridge, New York, USA} (2004).
     
     \bibitem{Ly}{ Ly, H. and Soljanin, E.}: \newblock{Maximal achievable service rates of codes and connections to combinatorial designs},\newblock{\it{ 	 in Proc. 61st Annu. Allerton Conf. Commun., Control, Comput.}} (2025). %arXiv:2506.16983 [cs.IT]
 
\bibitem{LyH}{  Ly, H. and Soljanin, E.}: \newblock{Service rate regions of MDS codes and fractional matchings in quasi-uniform hypergraphs},\newblock{\it{  IEEE Trans. Inf. Theory}}, 72(4), pp. 2144-2161 (2026).
    
 \bibitem{LyHS}{ Ly, H., Soljanin, E. and Lalitha, V.}: \newblock{On the service rate region of Reed–Muller codes},\newblock{\it{  IEEE Trans. Inf. Theory}}, 72(8), pp. 5525-5542 (2026).%\newblock{DOI 10.1109/TIT.2026.3681406} 
 

  
  \bibitem{STS}{MacKay, D.J. and Davey, M.C.}:  \newblock{Evaluation of Gallager codes for short block length and high rate applications}, \newblock{\it{In Codes, Systems, and Graphical Models}}, pp. 113-130 (2001). 
    
    \bibitem{Noori}{  Noori, M.,  Soljanin, E. and  Ardakani, M.}: \newblock{On storage allocation for maximum service rate in distributed storage systems},\newblock{\it{ 	Proc. IEEE Int. Symp. Inf. Theory (ISIT)}}, pp. 240-244 (2016).
 

\bibitem{Stinson}{Stinson, D. R.}: \newblock{\it{Combinatorial Designs - Constructions and Analysis}}, \newblock{Springer, New York} (2004). 
  
\bibitem{Tanner}{Tanner, R.}: \newblock{A recursive approach to low complexity codes}, \newblock{\it{IEEE Trans. Inf. Theory}}, 27(5), pp. 533-547 (1981).
  
 \bibitem{Vasic}{Vasic, B. and Milenkovic, O.}: \newblock{Combinatorial constructions of low-density parity-check codes for iterative decoding},\newblock{\it{ IEEE Trans. Inf. Theory}}, 50(6), pp. 1156-1176 (2004).
 
\bibitem{Weldon}{Weldon, E.J.}:\newblock{ Difference-set cyclic codes},\newblock{\it{ The Bell System Technical Journal}}, 45(7), pp. 1045-1055 (1966).
 
 % \textcolor{blue}{\bibitem{Weller}{Johnson, S. and Weller, S.}: \newblock{Quasi-cyclic LDPC codes from difference families},\newblock{ \it{ 
    \end{thebibliography}
\end{document}